\LetLtxMacro{\existsold}{\exists}
\renewcommand{\exists}{\existsold \hspace{.2em} }
\LetLtxMacro{\forallold}{\forall}
\renewcommand{\forall}{\: \forallold \: }
\LetLtxMacro{\intnolim}{\int}
\renewcommand{\int}{\intnolim\limits }
\newcommand{\dd}{\,\text{d}}
\renewcommand{\tilde}{\widetilde}
\newcommand{\norm}[1]{\left\lVert #1 \right\rVert }
\renewcommand{\bar}[1]{\overline{#1} }
\renewcommand{\Re}{\text{Re} }
\renewcommand{\Im}{\text{Im} }
\renewcommand{\epsilon}{\varepsilon }
\DeclareMathOperator*{\tr}{tr}
\renewcommand{\hat}{\widehat }
\newcommand{\bra}[1]{\left\langle#1\right|}
\newcommand{\ket}[1]{\left|#1\right\rangle}
\newcommand{\braket}[2]{\left\langle#1|#2\right\rangle}
\newcommand{\braopket}[3]{\left\langle#1\left|#2\right|#3\right\rangle}
\newcommand{\ketsmall}[1]{|#1\rangle}
\numberwithin{equation}{section} 
\newtheorem{lemma}{Lemma}[section]
\theoremstyle{definition}
\newcommand{\Pzeromean}{\mathbf{P}}
\newcommand{\sn}{\text{sn}}
\newcommand{\cn}{\text{cn}}
\newcommand{\dn}{\text{dn}}
\newcommand{\RPA}{\text{RPA}}
\newcommand{\occ}{\text{occ}}
\newcommand{\vir}{\text{vir}}
\newcommand{\aux}{\text{aux}}
\newcommand{\orb}{\text{orb}}
\newcommand{\ERPAcor}{E_\text{c}^{\RPA}}
\newcommand{\bJ}{\mathbf{J}}
\newcommand{\bK}{\mathbf{K}}
\newcommand{\orthP}{\mathbf{B}}
\newcommand{\changed}[1]{#1}
\title[Cubic scaling RPA correlation via ISDF]{Cubic scaling algorithms for RPA correlation using interpolative separable density fitting}
\author{Jianfeng Lu}
\address{Department of Mathematics, Department of
  Physics, and Department of Chemistry, Duke University, Box 90320, Durham NC 27708, USA} \email{jianfeng@math.duke.edu}
\author{Kyle Thicke}
\address{Department of Mathematics, Duke University, Box 90320, Durham NC 27708, USA}
\email{kyle.thicke@duke.edu}
\date{\today} \thanks{This work is partially supported by
the National Science Foundation under grants DMS-1454939.\\
\copyright\, 2017.  This manuscript version is made available under the CC-BY-NC-ND 4.0 license\\ \texttt{http://creativecommons.org/licenses/by-nc-nd/4.0/}}
\begin{document}

\begin{abstract}
We present a new cubic scaling algorithm for the calculation of the RPA correlation energy.  Our scheme splits up the dependence between the occupied and virtual orbitals in $\chi^0$ by use of Cauchy's integral formula.  This introduces an additional integral to be carried out, for which we provide a geometrically convergent quadrature rule.  Our scheme also uses the newly developed Interpolative Separable Density Fitting algorithm to further reduce the computational cost in a way analogous to that of the Resolution of Identity method.
\end{abstract}

\maketitle

\section{Introduction}

In Density Functional Theory (DFT) \cite{HohenbergKohn:64, KohnSham:65}, the ground state energy of a many-body quantum system is written as a functional of the density $\rho$.  In the Kohn-Sham (KS) formalism of DFT \cite{KohnSham:65}, instead of considering the original interacting system of $N_\occ$ electrons, we consider a system of $N_\occ$ non-interacting electrons (the KS system) under a different external potential whose ground state density is identical to that of the interacting system.  In this effective single-particle system, the ground state density is given by
\begin{equation}
\rho(x) = \sum_{j=1}^{N_\occ} |\psi_j(x)|^2,
\end{equation}
where $\{\psi_j\}$ are the Kohn-Sham orbitals, the eigenstates of the effective single-particle system.  It is assumed throughout that the KS orbitals are ordered so that $\psi_1$ is the ground state of the KS system, $\psi_2$ is the first excited state, and so on.  In KS-DFT, the ground state energy of a system with $N_\occ$ interacting electrons can be written as
\begin{equation}
E = T_\text{s} + U_\text{ext} + U_\text{H} + E_\text{xc}, \label{GSenergy}
\end{equation}
where
\begin{align}
T_\text{s} &= \frac{1}{2}\sum_{j=1}^{N_\occ} \int |\nabla \psi_j(x)|^2 \dd x, & U_\text{ext} &= \int V_\text{ext}(x)\rho(x) \dd x, \\
U_\text{H} &= \frac{1}{2} \iint \rho(x)\rho(y)v(x,y) \dd x \dd y,
\end{align}
are, respectively, the kinetic energy of the effective single-particle
system, the potential energy due to the external potential
$V_\text{ext}$, and the so-called Hartree energy, which represents the
classical contribution of the energy from the Coulomb interaction
between electrons.  The remaining term in \eqref{GSenergy},
$E_\text{xc}$, is known as the exchange-correlation energy.  It has no
known simple form in terms of the density $\rho$ or the Kohn-Sham
orbitals $\{\psi_j\}$ and therefore needs to be approximated.  There
are many ways \cite{perdew2001jacob} of approximating this functional.
In this work, we consider one of the more accurate (and more computationally expensive) approximations, the
Random Phase Approximation (RPA).  In particular, we separate out the
exchange and correlation parts:
$E_\text{xc} = E_\text{x} + E_\text{c}$, and we use the exact exchange
$E_\text{x}^\text{EX}$ for the exchange energy $E_\text{x}$, and the
Random Phase Approximation to approximate the correlation energy
$E_\text{c}$ \cite{ren2012random}.
\begin{align}
E_\text{x}^\text{EX} &= -\sum_{jk} f_j f_k \iint \psi_j^*(x)\psi_k(x)\hat{v}(x,y)\psi_k^*(y)\psi_j(y) \dd x \dd y, \\
\ERPAcor &= \frac{1}{2\pi} \int_0^\infty \tr \left[ \ln(1-\hat{\chi}^0(i\omega)\hat{v}) + \hat{\chi}^0(i\omega)\hat{v} \right] \dd\omega, \label{RPAcorEnergy}
\end{align}
where
\begin{align}
\hat{\chi}^0(x,y,i\omega) &= \sum_{jk} \frac{(f_j-f_k)\psi_j^*(x)\psi_k(x)\psi_k^*(y)\psi_j(y)}{\epsilon_j - \epsilon_k - i\omega}, \label{chi_0} 
\end{align}
and $\hat{v}$ is the Coulomb kernel (in particular, we will consider the periodic Coulomb kernel, \changed{see Section~\ref{sec:RPAaux}}), and $\tr[A] = \int \braopket{x}{A}{x} \dd x$.  We will only consider the zero temperature case.  This means that, in the ground state, the first $N_\occ$ KS orbitals are filled while the rest are empty.  So, $f_\ell = 1$ if $1 \le \ell \le N_\occ$ (the occupied orbitals), and $f_\ell = 0$ if $\ell > N_\occ$ (the virtual orbitals).

In practice, one needs a way to obtain the KS orbitals before the energy can be computed.  This can be done via a self-consistent iteration.  However, we do not consider this here.  Instead, we only consider the calculation of the energy after the KS orbitals are known.  In this sense, we are considering a perturbative, non-self-consistent calculation of the RPA correlation energy.  That is, in a practical implementation, the KS orbitals could be calculated via a self-consistent iteration using a computationally less expensive, but also less accurate, approximation for the exchange-correlation energy functional (e.g., LDA, GGA).  The orbitals which are output from that self-consistent iteration can then be used to compute a more accurate approximation to the true correlation energy by using them to calculate the RPA correlation energy.  In this way, one obtains an approximation to the true correlation energy which is better than the less expensive method (LDA, GGA, etc.), but also does not require the self-consistent iteration to deal with the expense of RPA.

Of all the terms we have defined above, the RPA correlation energy
$\ERPAcor$ is the most computationally expensive to calculate.  The
goal of this paper is to provide a cubic scaling algorithm for the
computation of this term.  Before we can effectively talk about
scaling, we must first define some notation.  In this work, we will
use a spatial discretization with equally spaced grid points.  We
denote the total number of grid points by $n$.  We denote the number
of occupied orbitals, i.e., the number of electrons, by $N_\occ$.
Since there are infinitely many KS orbitals $\{\psi_j\}_{j=1}^\infty$
\footnote{When the spatial discretization is fixed with $n$ grid points, the
total number reduces to $n$, which is much larger than $N_{\occ}$.} and
the orbitals corresponding to higher energies will tend to have
smaller contributions to $\hat{\chi}^0$, we choose to use only the
$N_{\orb}$ KS orbitals of lowest energy in the RPA calculation.  This
gives us $N_\vir = N_\orb - N_\occ$ virtual orbitals.  Since $n$,
$N_\occ$, $N_\vir$, and $N_\orb$ all grow linearly with the system
size, we will sometimes refer to a general $N$ as a characterization
of the system size.

Very recently, a few cubic scaling methods for calculating the
RPA correlation energy have been presented in the literature.  The
general idea involved is to split up the $j$ and $k$ dependence in the
computation of $\hat{\chi}^0$ by introducing a new integral.  The idea
is easy to motivate.  From \eqref{RPAcorEnergy}, we can see that
everything can be done in cubic scaling if we are able to construct
the matrices $\hat{v}$ and $\hat{\chi}^0(i\omega)$ in cubic time.
$\hat{v}$ is not hard to construct, so we will focus on
$\hat{\chi}^0$.  $\hat{\chi}^0$ has $O(N^2)$ entries, so each entry
of $\hat{\chi}^0$ must be calculated in $O(N)$ time.  By inspection
of \eqref{chi_0}, it is clear that the most natural computation will
take $O(N^2)$ due to the coupling of $j$ and $k$ in the denominator.
But if we can decouple the $j$ and $k$ dependence, then we can sum
over each index separately and calculate each entry of $\hat{\chi}^0$
in $O(N)$.  Two different integrals have been utilized for this
purpose.  The first is
\begin{equation}
\changed{-\int_0^\infty e^{\epsilon_j t} e^{-\epsilon_k t} e^{-i\omega t} \dd t = \frac{1}{\epsilon_j - \epsilon_k - i\omega},}
\end{equation}
\changed{where $\epsilon_k > \epsilon_j$}.  Using this integral, one separates the dependence of $j$ and $k$ in \eqref{chi_0} into a product of exponentials inside the integral.  This leads to the Laplace transform cubic scaling methods.  This idea was first applied to RPA calculations in \cite{kaltak2014low} and \cite{kaltak2014cubic}, where a projector augmented wave (PAW) basis was utilized.  The idea was later extended to \changed{atomic orbitals \cite{ochsenfeld2016communication,  ochsenfeld2017vanishing, wilhelm2016large}}.  The other integral used to break up the $j$ and $k$ dependency is
\begin{equation}
\frac{1}{2\pi i}\int_\mathcal{C} \frac{1}{(\lambda - \epsilon_j + i\omega)(\lambda - \epsilon_k)} \dd\lambda = \frac{1}{\epsilon_j - \epsilon_k - i\omega},
\end{equation}
where $\mathcal{C}$ is a positively oriented closed contour that
encloses $\epsilon_j-i\omega$, but not $\epsilon_k$.  This idea was
first used in the context of cubic scaling RPA in
\cite{moussa2014cubic}.  Our algorithm in this paper will also adopt
this idea.

The most significant contribution of this paper is the reduction of the prefactor in front of the cubic term in the computational complexity.  In order to motivate this second main idea of this paper, let us examine how the density fitting (also called resolution of identity) approximation lowers the computational cost in the quartic scaling method \cite{eshuis2010fast, Ren_etal:12}.  The idea behind the approximation is that $\hat{\chi}^0$ is nearly equal to a low rank matrix due to its structure.  So, $\hat{\chi}^0$ (and $\hat{v}$) are formed into smaller sized matrices (by writing $\hat{v}$ into a smaller auxiliary basis and $\hat{\chi}^0$ into the dual basis) before the trace is taken.  The smaller matrix sizes lower the computational cost, but it is still $O(N^4)$ since the coupling of $j$ and $k$ is unaffected by the approximation.  

After we split up $j$ and $k$ in the denominator of \eqref{chi_0} using Cauchy's integral formula, we wish to further reduce the computational cost by taking advantage of the ``low rank'' nature of $\hat{\chi}^0$ using the same idea as density fitting (DF).  However, the DF approximation cannot be used in our case for two reasons.  The first is that the DF itself takes $O(N^4)$ operations, which destroys the cubic scaling.  The second problem is that $j$ and $k$ are coupled in the coefficients of the density fitting method.  As long as $j$ and $k$ remain coupled, $\hat{\chi}^0$ cannot be constructed in $O(N^3)$.  Solutions to both of these problems are provided by the interpolative separable density fitting (ISDF) method \cite{lu2015fast}.  The ISDF is capable of computing a decomposition of $\psi_j^*\psi_k$ similar to that of DF except that the $j$ and $k$ dependence in the coefficients are separated.  Additionally, the decomposition can be performed in $O(N^3)$ due to the use of a random projection in the method.  Our use of the ISDF also reduces the memory cost of our algorithm to $O(nN_{\aux})$ compared to $O(N_{\aux}^3)$ for the traditional resolution of identity approach.

Let us also mention the recent work \cite{lin2017adaptively}, where a
related problem of phonon calculation is approached from the point of
view of the Sternheimer equations to represent $\hat{\chi}$ acting on
functions.  Normally, for phonon calculations, $O(N^2)$ Sternheimer
equations would need to be solved in order to compute
$\hat{\chi}^0\hat{v}$.  However, by use of interpolative separable
density fitting and a polynomial interpolation, they reduce the number
of Sternheimer equations to $O(N)$, which enables a cubic scaling
algorithm.

The rest of the paper is organized as follows.  In Section \ref{sec:derivation}, we reformulate the expression \eqref{RPAcorEnergy} into a new, approximate form.  This new expression, characterized by \eqref{trtotr}, is used as the basis for our cubic scaling algorithm.  Section 3 begins with a summary of our algorithm followed by a detailed description of each step.  In Section \ref{sec:numerics}, we run numerical tests to compare the scaling of our algorithm against the quartic scaling resolution of identity algorithm.

\section{Derivation of the method}
\label{sec:derivation}

In this section, we reformulate the RPA correlation energy
\eqref{RPAcorEnergy} which will be used to construct our cubic scaling
algorithm. \changed{We will consider a computational domain with
  periodic boundary condition: Without loss of generality, up to a
  rescaling, the computational domain is taken to be $[0, 1]^d$, where
  $d$ is the spatial dimension. For simplicity, in this paper we just
  consider the $\Gamma$-point calculation with periodic Coulomb
  kernel, while we will leave Brillouin-zone sampling to future
  works. The interpolative separable density fitting has been extended
  to Bloch waves in \cite{lu2015fast}.}

\subsection{Contour integral representation} 

The first key idea is to split up the dependence of $j$
and $k$ in the denominator of \eqref{chi_0}.  This is accomplished
through the use of Cauchy's integral formula.  For a given $\omega$,
let $\mathcal{C}$ be a closed contour in the complex plane oriented in
the clockwise direction which encloses $\epsilon_\ell$ for all $\ell$
which are unoccupied, and does not enclose $\epsilon_\ell \pm i\omega$
for any $\ell$ that is occupied.  An example of such a contour is
shown in Figure \ref{fig:fullContour}.  
\begin{figure}[h]
\centering
\includegraphics[trim = 0in .5in 0in .8in, clip, scale=.8]{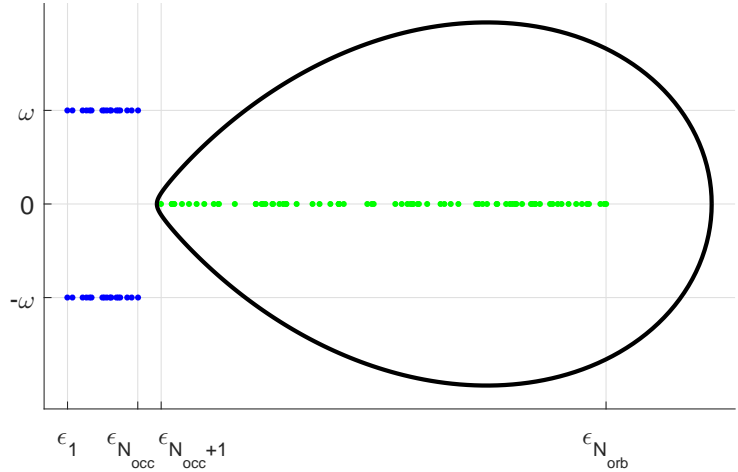}
\caption{An example of the contour $\mathcal{C}$ (see derivation in Section~\ref{subsec:contourIntegral}).  The blue points represent $\{\epsilon_j \pm i\omega\}_{j=1}^{N_\occ}$ (for a particular choice of $\omega$), and the green points represent $\{\epsilon_k\}_{k=N_{\occ}+1}^{N_\orb}$. } \label{fig:fullContour}
\end{figure}
While in principle the contour
can be chosen differently for different $\omega$, later in Section
\ref{subsec:contourIntegral}, we will make the restriction that
$\mathcal{C}$ is the same for all $\omega$ for the purpose of reducing
computational costs.  \changed{If $j$ is occupied and $k$ is unoccupied,} then using Cauchy's integral formula, we may write the coefficient in \eqref{chi_0} as
\begin{align}
\frac{f_j-f_k}{\epsilon_j - \epsilon_k - i\omega} &= \frac{1}{\epsilon_j - \epsilon_k - i\omega} \cdot \frac{1}{2\pi i}\int_\mathcal{C} \left(\frac{1}{\lambda - \epsilon_j + i\omega} - \frac{1}{\lambda - \epsilon_k}\right) \dd\lambda \notag\\
  &= \frac{1}{2\pi i}\int_\mathcal{C} \frac{1}{(\lambda - \epsilon_j + i\omega)(\lambda - \epsilon_k)} \dd\lambda. \label{CauchySplitjk}
\end{align}
This can then be used to obtain an expression for $\hat{\chi}^0$ which can be computed in cubic time,
\begin{align}
\braopket{x}{\hat{\chi}^0(i\omega)}{y} &= \sum_{jk} \frac{(f_j-f_k)\psi_j^*(x)\psi_k(x)\psi_k^*(y)\psi_j(y)}{\epsilon_j - \epsilon_k - i\omega}, \notag\\
  &= \sum_j^{\occ} \sum_k^{\vir} \frac{\psi_j^*(x)\psi_k(x)\psi_k^*(y)\psi_j(y)}{\epsilon_j - \epsilon_k - i\omega} + \text{c.c.} \notag\\
  &= \frac{1}{2\pi i} \int_\mathcal{C} \left(\sum_j^{\occ} \frac{\psi_j^*(x)\psi_j(y)}{\lambda - \epsilon_j + i\omega}\right) \left(\sum_k^{\vir} \frac{\psi_k(x)\psi_k^*(y)}{\lambda - \epsilon_k}\right) \dd\lambda + \text{c.c.}, \label{chi_0noISDF}
\end{align}
where $\text{c.c.}$ is the complex conjugate.  We show in Lemma
\ref{lem:contourConvRate} that the contour integral can be discretized
with a number of quadrature points which is logarithmic in
$(\epsilon_{N_\orb}-\epsilon_{N_\occ})/(\epsilon_{N_{\occ+1}}-\epsilon_{N_\occ})$.

Note that the formula \eqref{chi_0noISDF} already provides a cubic
scaling method for calculating $\chi^0(i\omega)$.  In particular,
ignoring logarithmic factors, $\chi^0(i\omega)$ can be calculated with
cost $O(N_{\orb}n^2)$.  However, the number of grid points $n$ could
be much larger than the number of orbitals, so to reduce the prefactor
of the computational cost, we will write the problem into an auxiliary
basis set instead of using the spatial grid points.  

Moreover, reducing the rank of $\hat{\chi}^0$ from
$N_\occ \cdot N_\vir$ (its rank before spatial discretization) to the
number of grid points $n$ (its rank after spatial discretization,
assuming $n < N_\occ\cdot N_\vir$) is really just a limitation placed
on the operator by the particular discretization of the problem,
rather than something inherent to the operator itself.
The intuitive idea
for the expected approximate low rank of $\hat{\chi}^0$ is that
$\hat{\chi}^0$ contains $O(N_\orb)$ information in its definition,
and therefore its approximate rank should scale linearly with the
number of \emph{orbitals} used in the calculation, rather than the
number of grid points.  This motivates the use of ISDF, as recalled in
the next subsection.

\subsection{Interpolative Separable Density Fitting}

We use the Interpolative Separable Density Fitting (ISDF)
\cite{lu2015compression,lu2015fast} to further accelerate the
computation. 
ISDF aims at a representation of the orbital pair functions as
\begin{equation}
  \psi_j^*(x)\psi_k(x) \approx \sum_{\mu=1}^{N_{\aux}} \psi_j^*(x_\mu) \psi_k(x_\mu) P_\mu(x), \label{ISDFexpansion}
\end{equation}
where the $\{x_\mu\}$ and $\{P_\mu\}$ are chosen by the ISDF
algorithm, which we will recall below for completeness of the
presentation.  Here $N_{\aux}$ denotes the number of auxiliary
orbitals needed to represent the orbital pairs involved; it is
empirically established that $N_{\aux}$ depends linearly on $N_{\orb}$
\cite{lu2015compression}, which we will also further verify in our
numerical examples.  The representation \eqref{ISDFexpansion} should
be compared to the traditional density fitting which yields
\begin{equation}
\psi_j^*(x)\psi_k(x) \approx \sum_{\mu=1}^{N_{\aux}} C_{jk}^\mu P_\mu(x),
\end{equation}
where $\{P_\mu(x)\}$ are inputs to the DF algorithm and the
coefficients $C_{jk}^\mu$ are determined via least squares fitting (in
the $L^2$ or Coulomb metric).  In \eqref{ISDFexpansion}, the
$\psi_j^*(x_\mu) \psi_k(x_\mu)$ factor is the coefficient for the
basis function $P_\mu(x)$. The main difference is thus that the $j$
and $k$ dependence of the coefficients are cleanly separated in ISDF,
but not in DF. This is important for achieving cubic scaling in
our work. Furthermore, ISDF has some other advantages over DF: the
time and memory cost of ISDF is cheaper, in particular, it requires
only $O(nN_{\aux})$ memory and cubic scaling computational cost. In
addition, the auxiliary basis functions are determined by the algorithm and
do not have to be specified by the user.

Let us now describe the ISDF algorithm.  The essential idea of the
ISDF algorithm is to select important grid points $\{x_{\mu}\}$ via a
randomized column selection algorithm. For the application to RPA
correlation energy, we only need orbital pair functions of the type
$\psi_j^*(x)\psi_k(x)$ where one of $j$ or $k$ is occupied and the
other is unoccupied (see \eqref{chi_0noISDF}).  We can use this to our
advantage by making a slight modification to the algorithm in
\cite{lu2015fast}, which would otherwise give an approximation for all
$N_\orb^2$ orbital pair functions.  A version of ISDF which only
calculates approximations for the orbital pair functions we are
interested in is presented in Algorithm~\ref{alg:ISDF}.
\begin{algorithm}[htp]
\caption{Interpolative Separable Density Fitting for RPA}
\label{alg:ISDF}
\begin{algorithmic}[1]
\Require{Orbitals $\{\psi_\ell\}_{\ell=1}^{N_\orb}$, error tolerance $\text{tol}$.}
\Ensure{$N_\aux$, $\{x_\mu\}$ and $\{P_\mu\}$ for $\mu = 1, ..., N_\aux$.}
\State Construct an $N_\occ \times n$ matrix $U^\occ$ such that the $j$th row of $U^\occ$ is $\psi_j$.  Likewise, construct an $N_\vir \times n$ matrix $U^\vir$ using the virtual orbitals as the rows.

\State Multiply each of $U^\occ$ and $U^\vir$ on the left by a random diagonal matrix, and then take the discrete Fourier transform,
\begin{align}
\hat{U}^\occ_\xi(x) &= \sum_{\alpha=1}^{N_\occ} e^{-i2\pi\alpha \xi / N_\occ} \eta_\alpha U_\alpha(x), \notag\\
\hat{U}^\vir_\xi(x) &= \sum_{\alpha=1}^{N_\vir} e^{-i2\pi\alpha \xi / N_\vir} \gamma_\alpha U_\alpha(x),
\end{align}
where $\eta_\alpha$ and $\gamma_\alpha$ are random unit complex numbers.

\State Randomly choose $r_\occ = c\sqrt{N_\occ}$ rows of $\hat{U}^\occ$ and $r_\vir = c\sqrt{N_\vir}$ rows of $\hat{U}^\vir$ to create submatrices $\mathcal{U}^\occ$ and $\mathcal{U}^\vir$, respectively.

\State Construct an $r_\occ r_\vir \times n$ matrix $M$,
\begin{equation}
M_{st}(x) = \bar{\mathcal{U}}_s^\occ(x) \mathcal{U}_t^\vir(x), \qquad s = 1,...,r_\occ, \quad t = 1,...,r_\vir,
\end{equation}
where $(st)$ is viewed as the row index of $M$.

\State Find the QR factorization with column pivoting (QRCP) of the $2r_\occ r_\vir \times n$ matrix $\mathbf{M}$ formed by concatenating $M$ with its complex conjugate,
\begin{equation}
QR = \left[\begin{array}{c} M \\ \text{conj}(M) \end{array}\right] E = \mathbf{M} E,
\end{equation}
where $Q$ is unitary, $R$ is upper triangular with diagonal entries in decreasing order, and $E$ is a permutation matrix.  In the case that $M$ is real, we can just take $\mathbf{M} = M$.

\State Choose $N_\aux$ such that 
\begin{equation}
R_{N_\aux,N_\aux} \ge \text{tol} \cdot R_{1,1} > R_{N_{\aux+1},N_{\aux+1}}.
\end{equation}
Then, we have $\mathbf{M} \approx (\mathbf{M}E)_{:,1:N_\aux} P$, where we note that $\mathbf{M}E$ is a permutation of the columns of $\mathbf{M}$.

\State Calculate $P = R_{1:N_\aux,1:N_\aux}^{-1} R_{1:N_\aux,:} E^T$, where MATLAB notation is used for the indexing.  Then, the auxiliary basis functions $\{P_\mu\}_{\mu=1}^{N_\aux}$ are the rows of $P$.

\State Finally, the points $\{x_\mu\}_{\mu=1}^{N_\aux}$ are the grid points used in the first $N_\aux$ columns of $\mathbf{M}E$.  We can be more specific as follows.  First, to avoid a conflict in notation, label the grid points $\{y_\ell\}_{\ell=1}^n$.  That is, whenever we considered $\psi_j$ as a row vector, it was $\psi_j = [\psi_j(y_1), ..., \psi_j(y_n)]$.  Next, since $E$ is a permutation matrix, it defines a permutation $\sigma$.  In particular, for a matrix $A$, the product $AE$ is a column permuted version of $A$ where the $j$th column of $A$ has been sent to the $\sigma(j)$ column.  Using this notation, we have $x_\mu = y_{\sigma^{-1}(\mu)}$ for $\mu = 1,..., N_\aux$.
\end{algorithmic}
\end{algorithm}

Compared with the original ISDF algorithm presented in \cite{lu2015fast}, one technical difference is Step 5 in Algorithm~\ref{alg:ISDF}. 
The reason for introducing $\mathbf{M}$, instead of just using $M$ in the QRCP there, is that now we can take the basis functions $\{P_\mu\}$ to be real.  This can be seen by switching $j$ and $k$ in \eqref{ISDFexpansion} and taking the complex conjugate,
\begin{equation}
\psi_j^*(x)\psi_k(x) \approx \sum_{\mu=1}^{N_{\aux}} \psi_j^*(x_\mu) \psi_k(x_\mu) P_\mu^*(x). \label{ISDFswitchedjk}
\end{equation}
Note that \emph{both} \eqref{ISDFexpansion} and \eqref{ISDFswitchedjk} are valid only because we included the conjugate of $M$ in the QRCP.  Now, we may average the two expressions to show that we may write $\psi_j^*(x)\psi_k(x)$ in terms of real basis functions,
\begin{equation}
\psi_j^*(x)\psi_k(x) \approx \sum_{\mu=1}^{N_{\aux}} \psi_j^*(x_\mu) \psi_k(x_\mu) \,\Re[P_\mu(x)]. \label{ISDFreal}
\end{equation}
It turns out that taking the basis functions to be real considerably simplifies the expression for $\chi^0$ that we will obtain.  This leads to reduced computational effort as well as simpler code.  For these reasons, we will always assume that the auxiliary basis functions from the ISDF are real.

\subsection{Representation of $\hat{\chi}^0$ using interpolative separable density fitting}

Now we can use the ISDF to reduce the computational cost of the cubic scaling method for the RPA correlation energy.  First, we approximate the $\hat{\chi}^0$ operator by an operator $\tilde{\chi}^0$ by using the ISDF approximation.  For simplicity of notation, we define $C_j^\mu = \psi_j(x_\mu)$. 
\begin{align}
\braopket{x}{\hat{\chi}^0(i\omega)}{y} &= \notag\\
  &\hspace{-5em}= \frac{1}{2\pi i} \int_\mathcal{C} \left(\sum_j^{\occ} \frac{\psi_j^*(x)\psi_j(y)}{\lambda - \epsilon_j + i\omega}\right) \left(\sum_k^{\vir} \frac{\psi_k(x)\psi_k^*(y)}{\lambda - \epsilon_k}\right) \dd\lambda \notag\\
  & + \frac{1}{2\pi i} \int_\mathcal{C} \left(\sum_j^{\occ} \frac{\psi_j(x)\psi_j^*(y)}{\lambda - \epsilon_j - i\omega}\right) \left(\sum_k^{\vir} \frac{\psi_k^*(x)\psi_k(y)}{\lambda - \epsilon_k}\right) \dd\lambda \notag\\
  &\hspace{-5em}\approx \sum_{\mu\nu} \frac{1}{2\pi i} \int_\mathcal{C} \left[ \left(\sum_j^{\occ} \frac{\bar{C}_j^\mu C_j^\nu}{\lambda - \epsilon_j + i\omega}\right) \left(\sum_k^{\vir} \frac{C_k^\mu \bar{C}_k^\nu}{\lambda - \epsilon_k}\right) \right. \notag\\
  & \qquad + \left. \left(\sum_j^{\occ} \frac{C_j^\mu \bar{C}_j^\nu}{\lambda - \epsilon_j - i\omega}\right) \left(\sum_k^{\vir} \frac{\bar{C}_k^\mu C_k^\nu}{\lambda - \epsilon_k}\right) \right] \dd\lambda \,\,P_\mu(x) P_\nu(y) \notag\\  
  &\hspace{-5em}= \sum_{\mu\nu} \frac{1}{2\pi i} \int_\mathcal{C} \Bigg[ \bJ_{\mu\nu}(\lambda,\omega) \bK_{\mu\nu}(\lambda) + \bar{\bJ_{\mu\nu}(\bar{\lambda},\omega)} \bar{\bK_{\mu\nu}(\bar{\lambda})} \Bigg] \dd\lambda \,\,P_\mu(x) P_\nu(y) \label{chitildeJK} \\
  &\hspace{-5em}= \sum_{\mu\nu} \chi^0_{\mu\nu}(i\omega) P_\mu(x) P_\nu(y) \notag\\
  &\hspace{-5em}= \braopket{x}{\tilde{\chi}^0(i\omega)}{y}, \label{chi_0inxy}
\end{align}
where the last line defines notation of $\tilde{\chi}^0$, and we have used the short hands
\begin{align}
\bJ_{\mu\nu}(\lambda,\omega) &= \sum_j^{\occ} \frac{\bar{C}_j^\mu C_j^\nu}{\lambda - \epsilon_j + i\omega}, \\
\bK_{\mu\nu}(\lambda) &= \sum_k^{\vir} \frac{C_k^\mu \bar{C}_k^\nu}{\lambda - \epsilon_k}, \\
\chi^0_{\mu\nu}(i\omega) &= \frac{1}{2\pi i} \int_\mathcal{C} \Bigg[ \bJ_{\mu\nu}(\lambda,\omega) \bK_{\mu\nu}(\lambda) + \bar{\bJ_{\mu\nu}(\bar{\lambda},\omega)} \bar{\bK_{\mu\nu}(\bar{\lambda})} \Bigg] \dd\lambda. \label{chi_0Def}
\end{align}
We note that in \eqref{chitildeJK}, the separability of the ISDF
coefficients into the $j$ and $k$ components is crucial.  Without this
separability (e.g., if a conventional DF was used) we would not be
able to calculate $\tilde{\chi}^0$ in cubic time since the sums over
$j$ and $k$ would not decouple.

Before continuing, we state explicitly our notation related to $\hat{\chi}^0$ for the sake of clarity:
\begin{itemize}
\item $\hat{\chi}^0$ -- the original operator.
\item $\tilde{\chi}^0$ -- the approximation to $\hat{\chi}^0$ that is obtained by applying the ISDF approximation.
\item $\chi^0$ -- defined by \eqref{chi_0Def}.  In \eqref{chi_0inDualBasis}, we
will show that it is $\tilde{\chi}^0$ in the dual basis to the auxiliary basis functions.
\item The argument $i\omega$ is often suppressed below for simplicity of notation.
\end{itemize}

\subsection{RPA correlation energy with auxiliary basis functions}\label{sec:RPAaux}

We now return our attention to \eqref{RPAcorEnergy}, \changed{where
  $\hat{v}$ is the periodic Coulomb kernel (see e.g., \cite{Lieb:81}),
 which is defined such that $g = \hat{v} f$ solves the Poisson equation
\begin{equation}
  - \Delta g = 4\pi  \Bigl(f - \int_{[0, 1]^d} f\Bigr)
\end{equation}
with periodic boundary conditions and such that $\int_{[0, 1]^d} g = 0$, to fix the arbitrary constant. $\hat{v}$ can be written as an integral operator,
\begin{equation}
  (\hat{v} f)(x) = \int_{[0, 1]^d} \hat{v}(x - y) g(y) \,\mathrm{d} y 
\end{equation}
where the kernel function, understood as a function in $L^2([0, 1]^d)$,
is given by
\begin{equation}\label{eq:periodicCoulomb}
  \hat{v}(x- y) = \frac{1}{\pi} \sum_{\xi \neq 0, \xi \in \mathbb{Z}^d} \frac{1}{\lvert\xi\rvert^2} e^{2\pi i \xi \cdot (x-y)}. 
\end{equation} 
} 
We want to rewrite the expression \eqref{RPAcorEnergy} using the
approximate basis $\left\{\ket{P_\mu}\right\}$.  However, since we are
considering a problem with periodic boundary conditions, it will be
advantageous for us to instead consider the basis
$\left\{\mathbb{1},
  \left\{\ket{\Pzeromean_\mu}\right\}_{\mu=1}^{N_{\aux}}\right\}$,
where $\ket{\Pzeromean_\mu}$ is the shift of $\ket{P_\mu}$ so that it
has zero mean, and $\mathbb{1}$ is the constant function with norm 1.
The reasons for this change are explained further in Section
\ref{subsec:step2}.  Since we wish to work with an orthonormal set, we
introduce the orthonormalized basis functions
\begin{equation}
\ketsmall{\orthP_\mu} = \ket{\Pzeromean_\nu} S_{\nu \mu}^{-1/2},
\end{equation}
where $S_{\mu\nu} = \braket{\Pzeromean_\mu}{\Pzeromean_\nu}$.  Then we can take the trace with respect to the orthonormal set 
\begin{equation*}
  \left\{\mathbb{1}, \left\{\ket{\orthP_\mu}\right\}_{\mu=1}^{N_{\aux}}\right\}.
\end{equation*}  
In the following, we consider $\hat{\chi}^0$ and $\hat{v}$ to be linear operators on an $n$-dimensional space (i.e., the discretization of the operators with respect to our spatial grid) for the purposes of justifying our steps.
\begin{align}
\tr\left[\ln(I-\hat{\chi}^0\hat{v})\right] &\approx \tr\left[\ln(I-\tilde{\chi}^0\hat{v})\right] \notag\\
  &\approx \sum_\beta \braopket{\orthP_\beta}{\ln(I-\tilde{\chi}^0\hat{v})}{\orthP_\beta} + \braopket{\mathbb{1}}{\ln(I-\tilde{\chi}^0\hat{v})}{\mathbb{1}}. \label{trDeriv1}
\end{align}
The first line is justified as follows.  First, we note that $\hat{v}$ is bounded on finite dimensional spaces.  Therefore, for $\tilde{\chi}^0$ close enough to $\hat{\chi}^0$, we have $\norm{\hat{\chi}^0\hat{v} - \tilde{\chi}^0\hat{v}} \ll 1$. 
Thus, assuming that $I - \hat{\chi}^0 \hat{\nu}$ is invertible and $\ln ( I - \hat{\chi}^0 \hat{\nu})$ makes sense, the following linear approximation is justified
\begin{equation}
\tr\left[\ln(I-\hat{\chi}^0\hat{v}) - \ln(I - \tilde{\chi}^0\hat{v})\right] \approx -(I-\hat{\chi}^0\hat{v})^{-1} : (\hat{\chi}^0\hat{v} - \tilde{\chi}^0\hat{v}),
\end{equation}
where $A : B$ means $\sum_i\sum_j A_{ij} B_{ij}$, i.e., the sum of the entries of the entrywise product.  Before continuing our derivation, we first give a series expression for $\ln(I-\tilde{\chi}^0\hat{v})$.  To justify the expansion, we assume that the eigenvalues of $\hat{\chi}^0\hat{v}$ are contained in the left half complex plane.  Then for $\tilde{\chi}^0$ sufficiently close to $\hat{\chi}^0$, there exists $c > 1$ such that the following expansion holds.
\begin{align}
\ln(I-\tilde{\chi}^0\hat{v}) &=\ln\left[cI - \left((c-1)I+\tilde{\chi}^0\hat{v}\right)\right] \notag\\
  &= \ln(c)I + \ln\left[I - \frac{1}{c}\left((c-1)I+\tilde{\chi}^0\hat{v}\right)\right] \notag\\
  &= \ln(c)I - \sum_{\ell=1}^\infty \frac{[(c-1)I+\tilde{\chi}^0\hat{v}]^\ell}{\ell c^\ell} \notag\\
  &= \ln(c)I - \sum_{\ell=1}^\infty \frac{1}{\ell c^\ell} \sum_{p=0}^\ell {\ell \choose p} (c-1)^{\ell-p}(\tilde{\chi}^0\hat{v})^p. \label{logRewritten}
\end{align}

The first thing to note about \eqref{logRewritten} is that the
nullspace of $\hat{v}$ is contained in the nullspace of
$\ln(I-\tilde{\chi}^0\hat{v})$.  Since $\hat{v}$ is the periodic
Coulomb operator, the constant
function $\ket{\mathbb{1}}$ is in its nullspace \changed{(as $\xi = 0$ is excluded from the summation in \eqref{eq:periodicCoulomb})}.  Therefore, we may
drop the final term in \eqref{trDeriv1}, since it is zero. \changed{We
  remark that to accelerate the convergence with respect to the
  computational domain, a more sophisticated treatment of the Coulomb
  singularity at $\xi = 0$, rather than taking the periodic Coulomb
  kernel, is often used.  For example, see the method developed in
  \cite{Friedrich:2010}. }

Next, we note that the infinite sum in \eqref{logRewritten} is absolutely convergent, so we can continue \eqref{trDeriv1} by applying the above expansion and taking the trace through the sum.
\begin{align}
\tr\left[\ln(I-\hat{\chi}^0\hat{v})\right] &\approx \sum_\beta \braopket{\orthP_\beta}{\ln(I-\tilde{\chi}^0\hat{v})}{\orthP_\beta} \notag\\
  &= \sum_\beta \braopket{\orthP_\beta}{\ln(c)I}{\orthP_\beta} - \sum_{\ell=1}^\infty \frac{1}{\ell c^\ell} \sum_{k=0}^\ell {\ell \choose p} (c-1)^{\ell-p}\sum_\beta \braopket{\orthP_\beta}{(\tilde{\chi}^0\hat{v})^p}{\orthP_\beta}. \label{trDeriv2}
\end{align}
Next, we write the terms $\sum_\beta \braopket{\orthP_\beta}{(\tilde{\chi}^0\hat{v})^p}{\orthP_\beta}$ into a more computationally efficient, but approximate, representation.  For simplicity, we only give the derivation for $p = 1$, but the others are similar.
\begin{align}
\sum_\beta \braopket{\orthP_\beta}{\tilde{\chi}^0\hat{v}}{\orthP_\beta} &=\sum_\beta \left(\sum_\alpha S_{\beta\alpha}^{-1/2} \bra{\Pzeromean_\alpha}\right) \tilde{\chi}^0\hat{v} \left(\sum_\mu \ket{\Pzeromean_\mu} S_{\mu\beta}^{-1/2}\right) \notag\\
  &\approx \sum_{\alpha\beta\gamma\mu\nu} S_{\beta\alpha}^{-1/2} \braopket{\Pzeromean_\alpha}{\tilde{\chi}^0}{\Pzeromean_\gamma} S_{\gamma\nu}^{-1} \braopket{\Pzeromean_\nu}{\hat{v}}{\Pzeromean_\mu} S_{\mu\beta}^{-1/2} \notag\\
  &= \sum_{\alpha\gamma\mu\nu} S_{\mu\alpha}^{-1} \braopket{\Pzeromean_\alpha}{\tilde{\chi}^0}{\Pzeromean_\gamma} S_{\gamma\nu}^{-1} \braopket{\Pzeromean_\nu}{\hat{v}}{\Pzeromean_\mu} \notag\\
  &= \sum_{\mu\nu} \left(\sum_\alpha S_{\mu\alpha}^{-1} \bra{\Pzeromean_\alpha}\right)\tilde{\chi}^0\left(\sum_\gamma\ket{\Pzeromean_\gamma} S_{\gamma\nu}^{-1}\right) \braopket{\Pzeromean_\nu}{\hat{v}}{\Pzeromean_\mu} \notag\\
  &= \sum_{\mu\nu} \braopket{\tilde{\Pzeromean}_\mu}{\tilde{\chi}^0}{\tilde{\Pzeromean}_\nu} \braopket{\Pzeromean_\nu}{\hat{v}}{\Pzeromean_\mu}, \label{rewritingTrace}
\end{align}
where 
\begin{equation}
\bra{\tilde{\Pzeromean}_\mu} = \sum_\alpha S_{\mu\alpha}^{-1} \bra{\Pzeromean_\alpha},
\end{equation}
is the dual basis to $\left\{\ket{\Pzeromean_\mu}\right\}$.  Before commenting on the significance of this new representation, let's write $\braopket{\tilde{\Pzeromean}_\mu}{\tilde{\chi}^0}{\tilde{\Pzeromean}_\nu}$ into a simpler form.
\begin{align}
\braopket{\tilde{\Pzeromean}_\mu}{\tilde{\chi}^0}{\tilde{\Pzeromean}_\nu} &= \sum_{\alpha\gamma} S_{\mu\alpha}^{-1} \braopket{\Pzeromean_\alpha}{\tilde{\chi}^0}{\Pzeromean_\gamma} S_{\gamma\nu}^{-1} \notag\\
  &= \sum_{\alpha\gamma} S_{\mu\alpha}^{-1} \iint \braket{\Pzeromean_\alpha}{x}\braopket{x}{\tilde{\chi}^0}{y}\braket{y}{\Pzeromean_\gamma} \dd x \dd y \,\, S_{\gamma\nu}^{-1} \notag\\
  &= \sum_{\alpha\gamma\mu'\nu'} S_{\mu\alpha}^{-1} \iint \Pzeromean_\alpha(x) \Pzeromean_{\mu'}(x) \chi^0_{\mu'\nu'}(i\omega) \Pzeromean_{\nu'}(y) \Pzeromean_\gamma(y) \dd x \dd y \,\, S_{\gamma\nu}^{-1} \notag\\ 
  &= \sum_{\alpha\gamma\mu'\nu'} S_{\mu\alpha}^{-1} S_{\alpha\mu'} \chi^0_{\mu'\nu'}(i\omega) S_{\nu'\gamma} S_{\gamma\nu}^{-1} \notag\\ 
  &= \chi^0_{\mu\nu}(i\omega), \label{chi_0inDualBasis}
\end{align}
where $\chi^0_{\mu\nu}(i\omega)$ is defined in \eqref{chi_0Def}, and can therefore be computed in cubic time.  Let us define $v_{\mu\nu} = \braopket{\Pzeromean_\mu}{\hat{v}}{\Pzeromean_\nu}$.  Then the right hand side of \eqref{rewritingTrace} reads $\tr[\chi^0(i\omega)v]$, where the right hand side is just the standard trace of the product of the two matrices, $\tr[AB] = \sum_\mu\sum_\nu A_{\mu \nu}B_{\nu \mu}$.  Plugging this into \eqref{trDeriv2}, we obtain our final desired approximation,
\begin{equation}
\tr \left[ \ln(1-\hat{\chi}^0(i\omega)\hat{v}) + \hat{\chi}^0(i\omega)\hat{v} \right] \approx \tr \left[ \ln(1-\chi^0(i\omega)v) + \chi^0(i\omega)v \right]. \label{trtotr}
\end{equation}

\section{Algorithm}
\label{sec:algorithm}

In this section, we present the cubic scaling algorithm for the calculation of the RPA correlation energy.  We present a brief overview in Algorithm \ref{alg:cubicRPA} before going into the details of each step.  The computational effort is stated to the right of each step.

\begin{algorithm}[H]
\caption{Cubic scaling calculation of the RPA correlation energy}
\label{alg:cubicRPA}
\begin{algorithmic}[1]
\Require{Kohn-Sham orbitals $\{\psi_k\}$ and corresponding energies $\{\epsilon_k\}$.}
\Ensure{$\ERPAcor$}
\State Use $\{\psi_k\}_{k=1}^{N_{\orb}}$ as the input to ISDF to obtain $\{x_\mu\}_{k=1}^{N_{\aux}}$ and $\{P_\mu\}_{k=1}^{N_{\aux}}$. \hfill $O(nN_{\orb}^2)$
\State Compute the matrix $v_{\mu,\nu} = \braopket{\Pzeromean_\mu}{\hat{v}}{\Pzeromean_\nu}$. \hfill $O(nN_{\aux}^2)$
\State For each quadrature point $\omega_m$:
\begin{enumerate}[label=\alph*)]
\item Compute $\chi^0_{\mu,\nu}(i\omega_m) = \displaystyle\frac{1}{2\pi i} \int_\mathcal{C} \Bigg[ \bJ_{\mu,\nu}(\lambda,\omega_m) \bK_{\mu,\nu}(\lambda) + \bar{\bJ_{\mu,\nu}(\bar{\lambda},\omega_m)} \bar{\bK_{\mu,\nu}(\bar{\lambda})} \Bigg] \dd\lambda$. \hfill $O(N_{\orb}N_{\aux}^2)$
\item Compute $\frac{1}{2\pi} \tr\left[\ln(1-\chi^0(i\omega_m)v) + \chi^0(i\omega_m) v\right]$. \hfill $O(N_{\aux}^3)$
\end{enumerate}
\State Calculate $\ERPAcor = \frac{1}{2\pi} \int_0^\infty \tr\left[\ln(1-\chi^0(i\omega)v) + \chi^0(i\omega) v \right] \dd\omega$ via numerical quadrature.
\end{algorithmic}
\end{algorithm}

Without using the ISDF, the algorithm would be essentially exactly the same, but with Step 1 removed and Step 3a replaced by \eqref{chi_0noISDF}.  Except, in the computational costs, each $N_{\aux}$ would be replaced by $n$.  So clearly, if $N_{\aux}$ is much less than $n$, then including the ISDF can substantially speed up the algorithm.

\subsection{Step 2 -- Computing the Coulomb matrix}
\label{subsec:step2}

We note that $v$ can be efficiently computed by noticing that
\begin{align}
\braopket{\Pzeromean_\mu}{\hat{v}}{\Pzeromean_\nu} &= \iint \Pzeromean_\mu(x) \Pzeromean_\nu(y) v(x,y) \dd x \dd y \notag\\
	&= \int \Pzeromean_\mu(x) \phi_\nu(x) \dd x, \label{calculationOfv}
\end{align}
where 
\begin{equation}
\phi_\nu(x) = \int \Pzeromean_\nu(y) v(x,y) \dd y.
\end{equation}
Therefore, $\phi_\nu$ solves the Poisson equation, $-\Delta \phi_\nu = 4\pi \Pzeromean_\nu$ with periodic boundary conditions.  This equation can be efficiently solved using the fast Fourier transform.  Therefore, the functions $\phi_\nu$ can be precalculated at a total cost of $O(N_{\aux}n \log n)$.  Then the quadrature for \eqref{calculationOfv} is straightforward.

We have glossed over a couple details here.  First, the Poisson equation with periodic boundary conditions is not solvable unless $\Pzeromean_\nu$ has an average value of 0.  This is of course true by our construction of $\Pzeromean_\nu$, and this is the reason for the use of the $\{\ketsmall{\Pzeromean_\mu}\}$ basis rather than the $\{\ketsmall{P_\mu}\}$ basis when calculating the trace in \eqref{trDeriv1}.  The second detail we've skipped is that the solution $\phi_\nu$ is not unique as we can add any constant and get another solution.  However, it turns out that adding a constant to $\phi_\nu$ does not change the integral in \eqref{calculationOfv} since $\Pzeromean_\mu(x)$ has mean 0.  So, this problem is also avoided by the use of the $\{\ketsmall{\Pzeromean_\mu}\}$ basis rather than the $\{\ketsmall{P_\mu}\}$ basis.

\subsection{Step 3 -- Quadrature rule for the contour integral}
\label{subsec:contourIntegral}

Before discussing our proposed quadrature rule, let us discuss the symmetry of \eqref{chitildeJK}.  For notational purposes, define 
\begin{equation}
I_{\mu,\nu}(\lambda,\omega) = \frac{1}{2\pi i} \Bigg[ \bJ_{\mu,\nu}(\lambda,\omega) \bK_{\mu,\nu}(\lambda) + \bar{\bJ_{\mu,\nu}(\bar{\lambda},\omega)} \bar{\bK_{\mu,\nu}(\bar{\lambda})} \Bigg].
\end{equation}
It is straightforward to show the following symmetry across the real line, 
\begin{align}
\bar{\bK_{\mu,\nu}(\bar{\lambda})} &= \bK_{\nu,\mu}(\lambda), \\
\Re\left[I_{\mu,\nu}(\bar{\lambda},\omega)\right] &= - \Re\left[I_{\mu,\nu}(\lambda,\omega)\right], \label{symmAboutReal1} \\
\Im\left[I_{\mu,\nu}(\bar{\lambda},\omega)\right] &= \Im\left[I_{\mu,\nu}(\lambda,\omega)\right]. \label{symmAboutReal2}
\end{align}
We wish to calculate 
\begin{equation}
\chi^0_{\mu\nu}(i\omega) = \int_\mathcal{C} I_{\mu,\nu}(\lambda,\omega) \dd \lambda,
\end{equation}
where $\mathcal{C}$ is oriented clockwise and encloses $\{\epsilon_k\}_{k=N_{\occ}+1}^{N_{\orb}}$ while enclosing none of $\{\epsilon_j \pm i\omega\}_{j=1}^{N_{\occ}}$.  An example of such a contour is given in Figure \ref{fig:fullContour}.  In order to use the symmetry about the real axis, we will enforce our contour to be symmetric about the real axis.  Define $\mathcal{C}_\text{upper}$ and $\mathcal{C}_\text{lower}$ to be the parts of the contour in the upper and lower half plane.  Then using \eqref{symmAboutReal1} and \eqref{symmAboutReal2}, we have
\begin{align}
\chi^0_{\mu\nu}(i\omega) &= \int_{\mathcal{C}_\text{upper}} I_{\mu,\nu}(\lambda,\omega) \dd \lambda + \int_{\mathcal{C}_\text{lower}} I_{\mu,\nu}(\lambda,\omega) \dd \lambda \notag\\
  &= \int_{\mathcal{C}_\text{upper}} I_{\mu,\nu}(\lambda,\omega) \dd \lambda - \int_{\mathcal{C}_\text{upper}} I_{\mu,\nu}(\bar{\lambda},\omega) \dd \lambda \notag\\
  &= 2 \,\Re \int_{\mathcal{C}_\text{upper}} I_{\mu,\nu}(\lambda,\omega) \dd \lambda.
\end{align}

\begin{figure}[h]
\centering
\begin{subfigure}{.5\textwidth}
  \centering
  \includegraphics[trim = 0.8in 1.8in 0.8in 2.0in, clip, scale=.25]{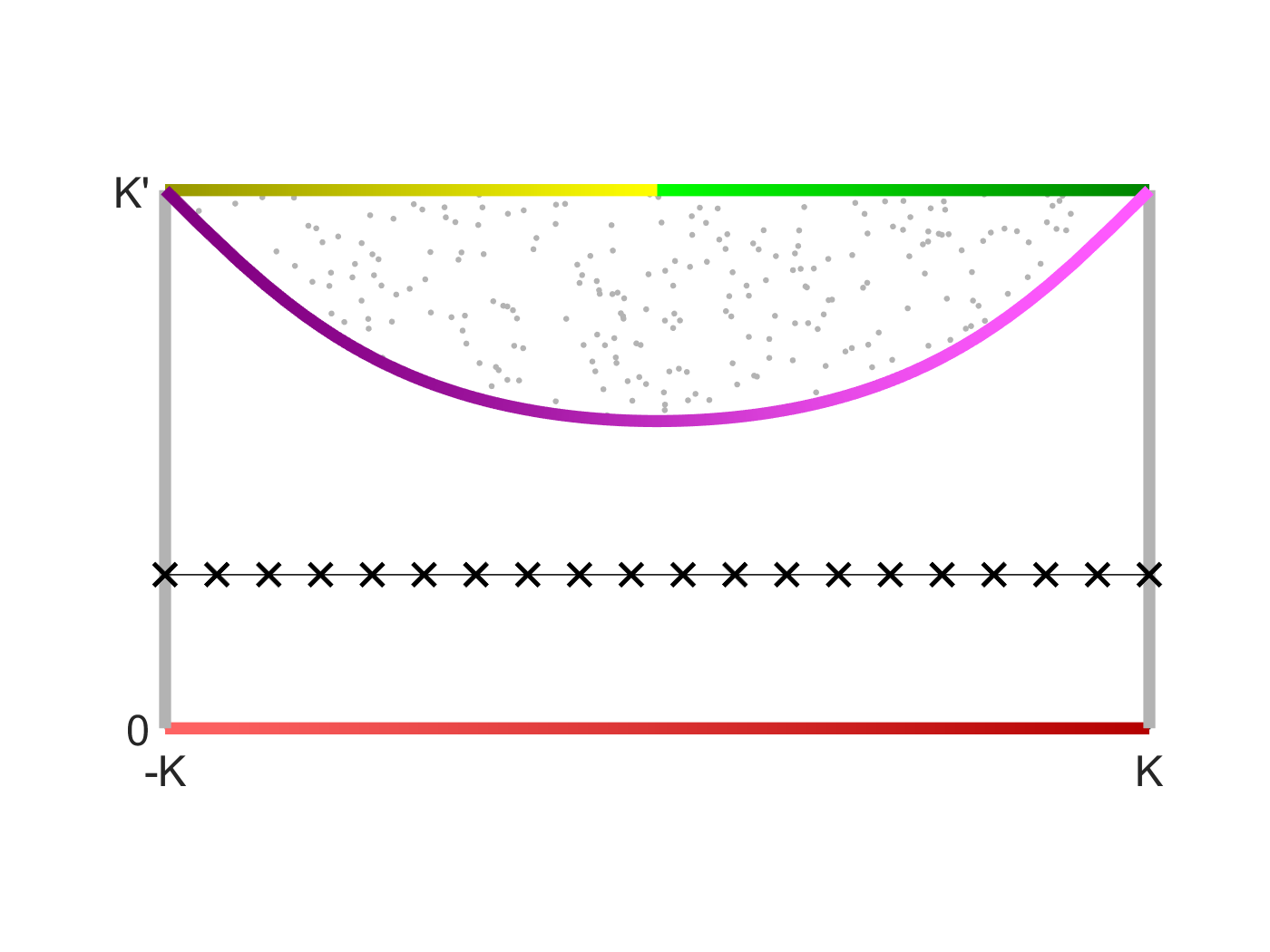}
  \caption{$t$-plane}
  \label{fig:tplane}
\end{subfigure}%
\begin{subfigure}{.5\textwidth}
  \centering
  \includegraphics[trim = 0.8in 1.8in 0.8in 2.0in, clip, scale=.25]{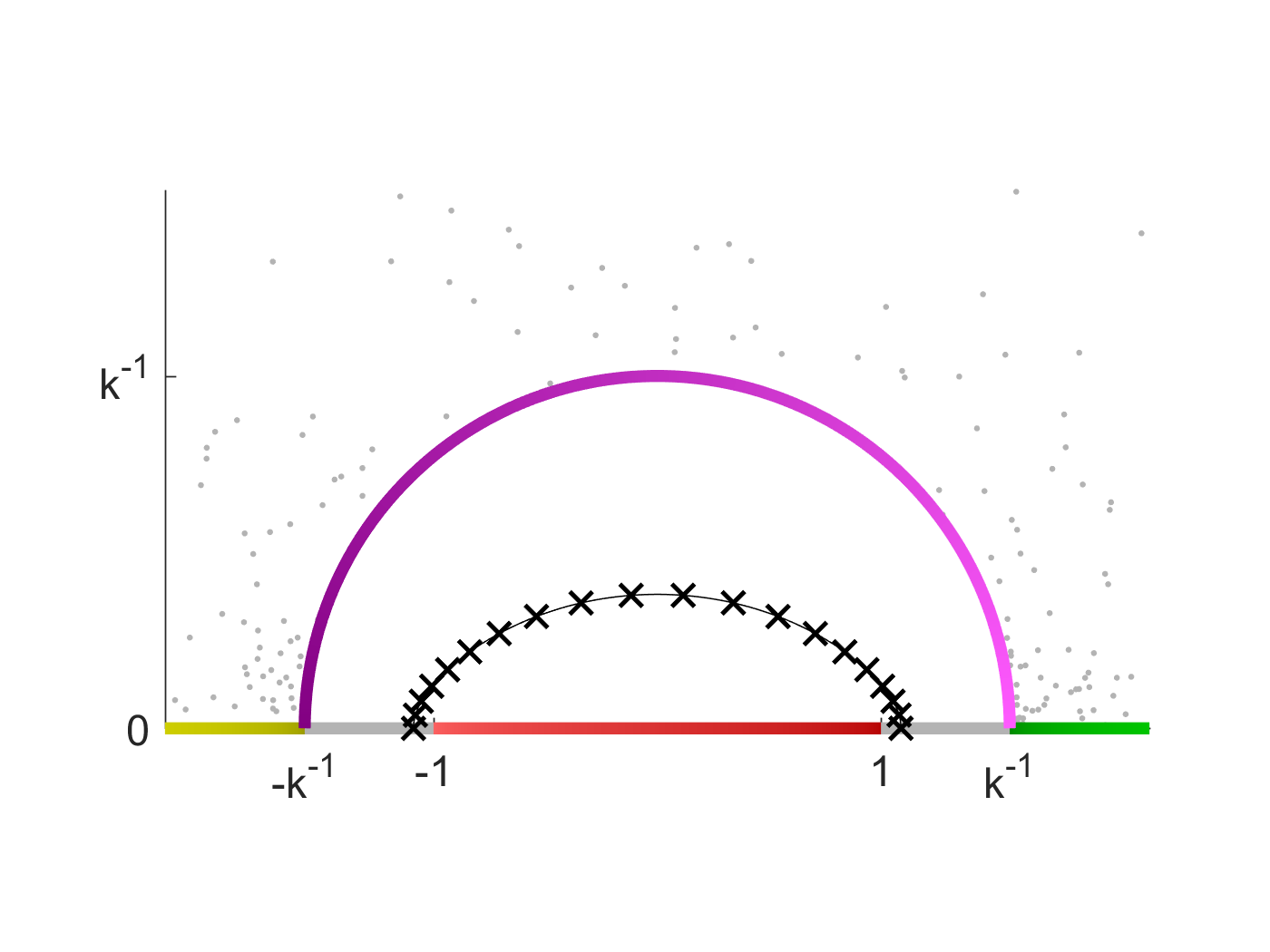}
  \caption{$u$-plane}
  \label{fig:uplane}
\end{subfigure}
\begin{subfigure}{.5\textwidth}
  \centering
  \includegraphics[trim = 0.8in 1.8in 0.8in 2.0in, clip, scale=.25]{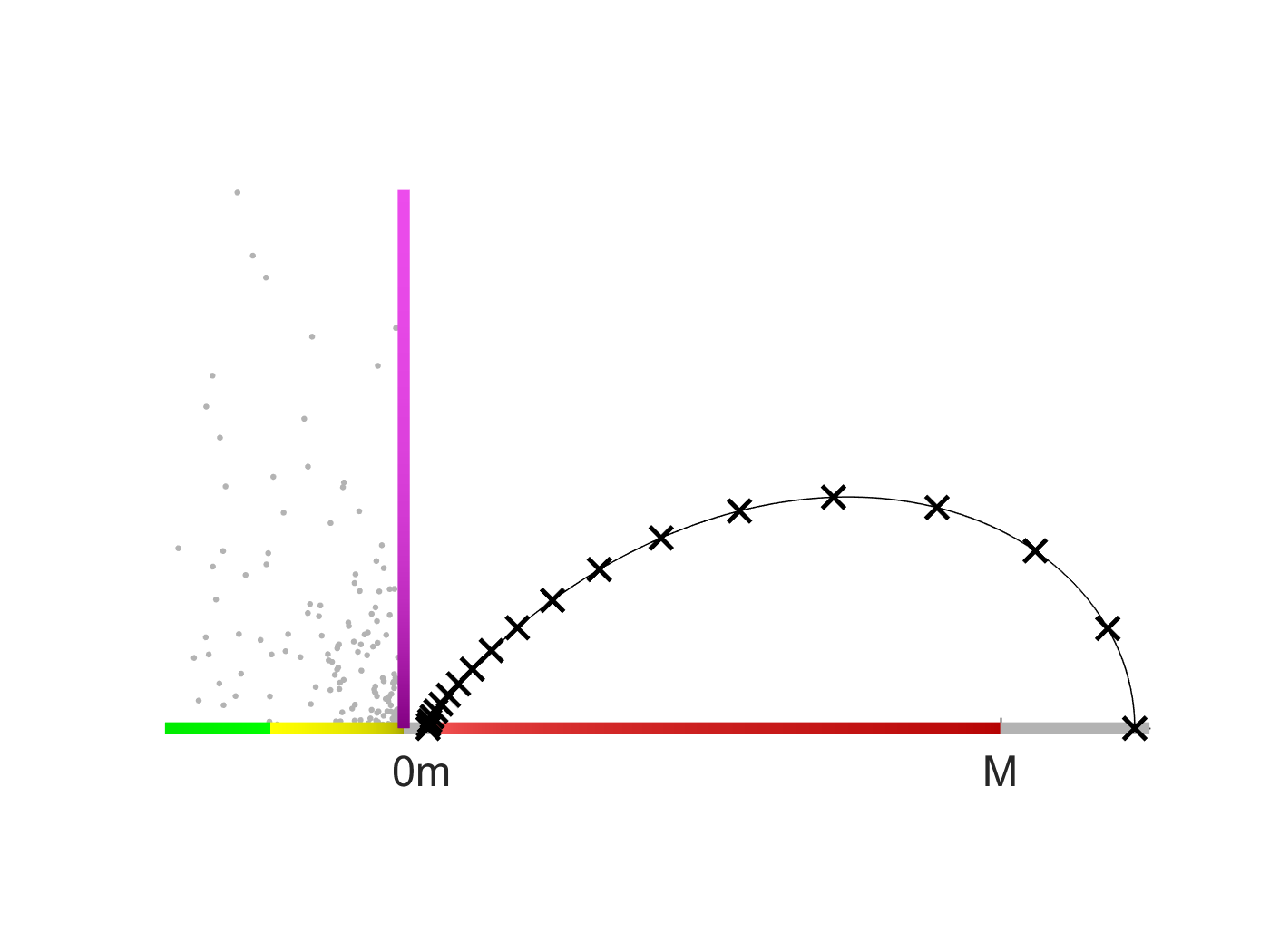}
  \caption{$z$-plane}
  \label{fig:zplane}
\end{subfigure}%
\begin{subfigure}{.5\textwidth}
  \centering
  \includegraphics[trim = 0.8in 1.8in 0.8in 2.0in, clip, scale=.25]{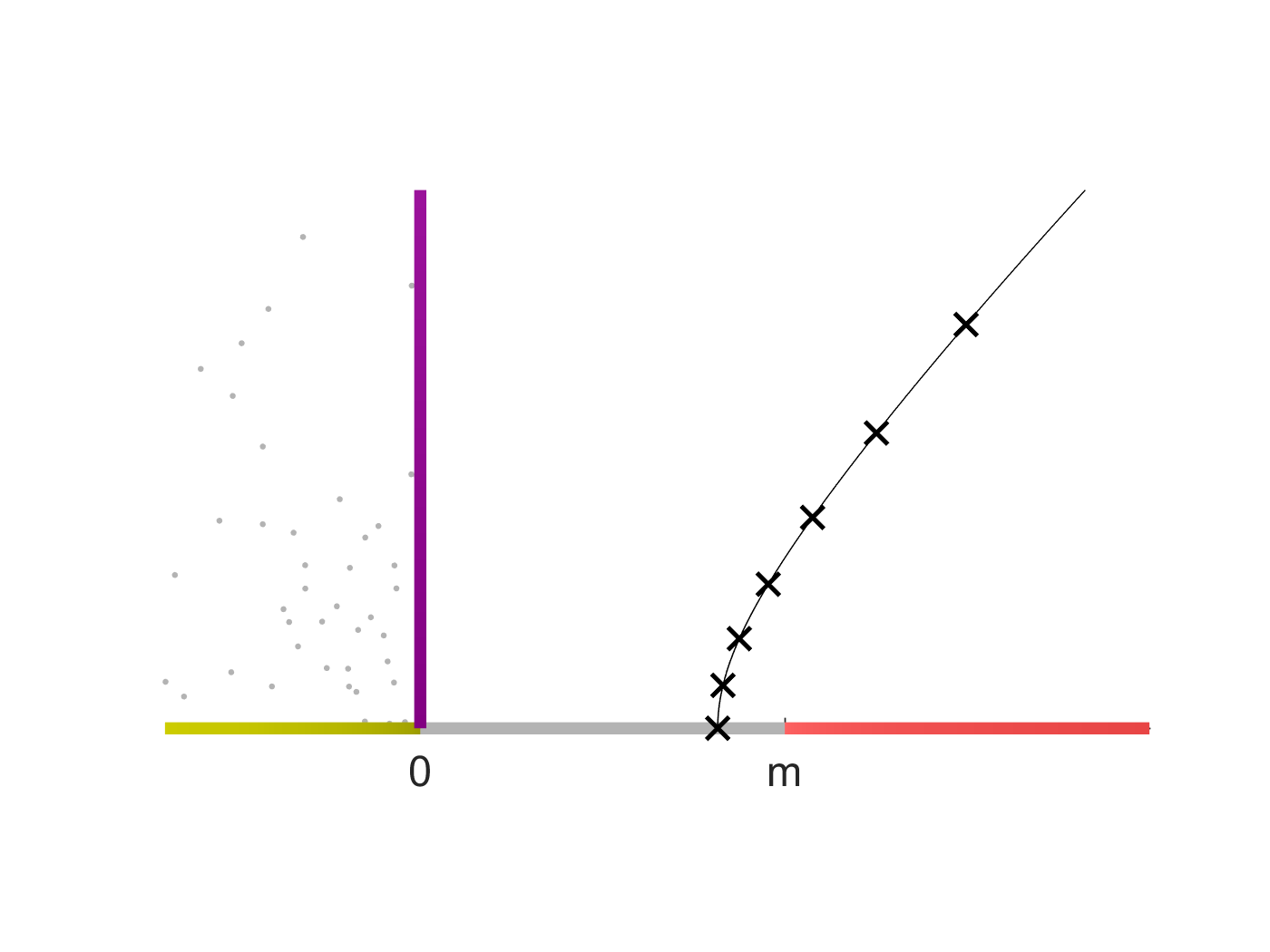}
  \caption{$z$-plane, zoomed in}
  \label{fig:zplaneZoomed}
\end{subfigure}
\caption{These figures show the transformations given in \eqref{tranformttou} and \eqref{transformutoz}.  Coloring is used to help show what is mapped where.  The gray dots were distributed randomly in the region bounded by the purple, yellow, and green curves of the $t$-plane to show that this region maps to the left half $z$-plane.  The red line contains the singularities we wish to encircle.  The purple line and the regions with gray dots contain the singularities we wish to avoid.}
\label{fig:transformations}
\end{figure}

We construct a quadrature rule for Step 3a by using similar ideas as in \cite{hale2008computing, LinLuYingE:09}.  For simplicity, let us assume that $\epsilon_{N_{\occ}} = 0$.  To account for the fact that it is not, we will just have to shift the resulting quadrature points by $\epsilon_{N_{\occ}}$.  For both simplicity of notation and to follow \cite{hale2008computing} more closely, let us define $m = \epsilon_{N_{\occ}+1}- \epsilon_{N_{\occ}}$ to be the energy gap and $M = \epsilon_{N_{\orb}}-\epsilon_{N_{\occ}}$.  Then we map the rectangle with vertices $\pm K$ and $\pm K + iK'$, where $K$ and $K'$ are the complete elliptic integrals~\cite{abramowitz1970handbook}
\begin{align}
K(k) &= \int_0^1 \frac{1}{\sqrt{(1-t^2)(1-k^2t^2)}} \dd t, \\
K'(k) &= K(1-k^2),
\end{align}
to the upper half plane via two consecutive transformations $t \mapsto u \mapsto z$ (see Figure \ref{fig:transformations}).
\begin{align}
u &= \sn(t) = \sn(t|k^2),\qquad\qquad k = \frac{\sqrt{M/m}-1}{\sqrt{M/m}+1}, \label{tranformttou}\\
z &= \sqrt{mM} \left(\frac{k^{-1} + u}{k^{-1} - u}\right), \label{transformutoz}
\end{align}
where $\sn(t|k^2)$ is the Jacobi elliptic function.  The values of $K$ and $K'$ can be found, e.g., via the ellipkkp function in the Schwarz-Christoffel Toolbox for MATLAB \cite{schwartzchristoffeltoolbox}.  The Jacobi elliptic functions $\sn(t)$, $\cn(t)$, and $\dn(t)$ are implemented in the ellipjc function in the same toolbox.  One must be careful when using such functions as there are a few different common conventions for how to parameterize the Jacobi elliptic functions.  We have been using the parameter $k$ while the Schwarz-Christoffel Toolbox uses the parameter $L = -\ln(k)/\pi$.

The idea for the quadrature rule is as follows.  We ultimately wish to construct a quadrature rule in the $z$-plane, but since the function we are integrating is periodic and analytic, one can show via some numerical analysis \cite[Section 4.6.5]{davis1984methods} that we can construct a geometrically convergent quadrature rule by using the trapezoid rule in the $t$-plane.  Essentially, the idea is to map the $z$-plane to a periodic rectangle, apply the trapezoid rule in the periodic rectangle where it is known to converge geometrically, and then map the quadrature points in the $t$-plane back to the $z$-plane to obtain our desire quadrature rule.  The trapezoid rule has the added bonus of having a nice nesting property of the quadrature points, so that we can create an adaptive quadrature rule.  

The numerical analysis tells us that the rate of convergence will be greatest when the quadrature path in the $t$-plane is as far away as possible from all singularities of the function we are integrating.  In order to find the singularities in the $t$-plane, let's first examine them in the $z$-plane.  Due to symmetry, we will only consider the contour and singularities in the upper half plane.  In our case, the function we are integrating is given by \eqref{chitildeJK}.  For a given $\omega \ge 0$, its singularities (in the upper half plane) are $\{\epsilon_j + i\omega\}_{j=1}^{N_{\occ}}$ and $\{\epsilon_k\}_{k=N_{\occ}+1}^{N_{\orb}}$.  We first notice that the singularities we wish to avoid depend on $\omega$.  There is nothing inherently difficult about this, and we could construct a different quadrature rule for each $\omega$.  However, in order to save on computation, we want the quadrature rule to remain the same for each $\omega$.  This way, $\bK(\lambda)$ does not need to be recalculated for each $\omega$.  Therefore, when we construct our quadrature rule, we wish to avoid all such singularities for $\omega \ge 0$.  Since we have assumed that $\epsilon_{N_{\occ}} = 0$, this implies that $\{\epsilon_j + i\omega\}_{\omega\ge 0, j=1,...,N_{\occ}}$ is contained in the left half $z$-plane.  Therefore, it is sufficient for us to say that we wish to avoid the entire left half $z$-plane.

However, when we construct the quadrature rule, we are concerned with the singularities in the $t$-plane.  By inverting the above mappings, we can see in Figure \ref{fig:transformations} that the left half $z$-plane is mapped to the region bounded by the yellow, green, and purple curves in the $t$-plane.  So, it is sufficient for us to avoid this region.  Next, we note that the rest of the singularities in the $z$-plane are contained in the red line.  This line is mapped to the bottom edge of the rectangle in the $t$-plane.  Finally, we wish for our contour in the $z$-plane to encircle the red line in the clockwise direction.  This is achieved by taking a contour in the $t$-plane which goes from the left side of the rectangle to the right side.  It is now clear how to maximize the distance between the singularities and the contour in the $t$-plane.  We must draw our contour in the $t$-plane halfway between the bottom of the rectangle and the lowest point of the purple curve.  This is demonstrated by a black horizontal line with X's in Figure \ref{fig:tplane}.  We apply the trapezoid rule on this line.  The line can be mapped back to the $z$-plane to create a quadrature rule as shown in Figure \ref{fig:zplane}.

Rather than now going into the rigorous details of the above argument, we will simply state the results.  The details and proofs are deferred to the Appendix.  First, the algorithm for Step 3a is summarized in Algorithm \ref{alg:step3a}.
\begin{algorithm}[H]
\caption{Step 3a -- Quadrature rule for contour integral}
\label{alg:step3a}
\begin{algorithmic}[1]
\State Define $m = \epsilon_{N_{\occ}+1}- \epsilon_{N_{\occ}}$ and $M = \epsilon_{N_{\orb}}-\epsilon_{N_{\occ}}$.
\State Compute $k = \frac{\sqrt{M/m}-1}{\sqrt{M/m}+1}$.
\State Compute $I = \frac{1}{2} \int_0^{k^{-1}} \frac{\dd s}{\sqrt{(1+s^2)(1+k^2s^2)}}$ via the midpoint rule with mesh size $h < 1/100$.
\State Define
\begin{equation}
\lambda(t) = \sqrt{mM}\left(\frac{k^{-1} + \sn(t)}{k^{-1} - \sn(t)}\right) + \epsilon_{N_{\occ}},
\end{equation}
as a shift of $z(t)$ to account for the fact that $\epsilon_{N_{\occ}}$ is typically not 0.  The quadrature rule is then found by applying the trapezoid rule (in the variable $t$) to
\begin{equation}
\chi^0_{\mu\nu}(i\omega) = 2 \,\Re\int_{\mathcal{C}_\text{upper}} I_{\mu,\nu}(\lambda,\omega) \dd\lambda = 2\, \Re \int_{-K + iI}^{K+iI} I_{\mu,\nu}(\lambda(t),\omega) \frac{2k^{-1}\sqrt{mM}}{(k^{-1}-\sn(t))^2} \cn(t) \dn(t) \dd t, \label{contourTrap}
\end{equation}
where the contour in the $t$ plane is the horizontal line connecting $-K+iI$ and $K+iI$.
\State Double the number of quadrature points (via the nesting property of the trapezoid rule) until suitable convergence is achieved.
\end{algorithmic}
\end{algorithm}

Next, we state the convergence rate of the proposed quadrature rule, whose proof may be found in the Appendix.

\begin{lemma}
\label{lem:contourConvRate}
Let $N_\lambda$ denote the number of quadrature points used to discretize \eqref{contourTrap} via the trapezoid rule.  Then, for any $M/m > 1$, the error of the quadrature rule is 
\begin{equation}
O\left(\exp\left(\frac{-\pi^2 N_\lambda}{2\log(M/m)+6}\right)\right). \label{contourQuadRuleError}
\end{equation}
\end{lemma}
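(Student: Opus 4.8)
The plan is to read the bound as the classical geometric-convergence estimate for the trapezoidal rule on an analytic periodic integrand, whose rate is set by the width of the strip of analyticity about the quadrature contour (\cite[Section~4.6.5]{davis1984methods}; cf.\ the parallel treatments in \cite{hale2008computing, LinLuYingE:09}). The composite substitution $t \mapsto u = \sn(t) \mapsto z = \lambda(t)$ has been arranged so that \eqref{contourTrap} is exactly the $N_\lambda$-point trapezoidal rule, in the variable $t$, for $g(t) := I_{\mu,\nu}(\lambda(t),\omega)\,\lambda'(t)$ along the segment $\Im t = I$, $-K \le \Re t \le K$ --- note $\tfrac{2k^{-1}\sqrt{mM}}{(k^{-1}-\sn t)^2}\cn t\,\dn t = \lambda'(t)$, so the Jacobian in \eqref{contourTrap} is precisely $\lambda'(t)$. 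Using the reflection $\sn(2K - t) = \sn(t)$ (whence $g(2K-t) = -g(t)$) and the symmetries \eqref{symmAboutReal1}--\eqref{symmAboutReal2} (whence $g(\bar t) = -\overline{g(t)}$), the symmetrization $\chi^0_{\mu\nu} = 2\Re\int_{\mathcal C_\text{upper}}$ is precisely the integral over one full period (of length $4K$) of the $4K$-periodic function $g$; thus \eqref{contourTrap} is, up to the symmetry fold, the full-period trapezoidal rule with $2N_\lambda$ nodes, $N_\lambda$ of which are actually evaluated. Its error is therefore $O\bigl(e^{-2\pi d\,(2N_\lambda)/(4K)}\bigr) = O\bigl(e^{-\pi d N_\lambda/K}\bigr)$, where $d$ is the distance from $\{\Im t = I\}$ to the nearest pole of $g$ and the $O$-constant is controlled by $\sup|g|$ over the pole-free strip. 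So the proof reduces to (i) identifying $d$ and (ii) estimating $d$ and $K$ in terms of $M/m$.

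For (i): the poles of $g$ are those of $I_{\mu,\nu}(\cdot,\omega)$, namely $\{\epsilon_k\}_{k=N_\occ+1}^{N_\orb}$ and (from the conjugated term in $I_{\mu,\nu}$) $\{\epsilon_j \pm i\omega\}_{j=1}^{N_\occ}$, together with the pole of $\lambda'$ where $\sn t = k^{-1}$. Pulling them back through $z = \lambda(t)$, with the normalization $\epsilon_{N_\occ} = 0$: the interval $[m,M] \ni \epsilon_k$ is the image of $[-K,K] \subset \mathbb{R}$, so those poles lie on $\{\Im t = 0\}$; as $\omega$ runs over $[0,\infty)$ the points $\epsilon_j + i\omega$ populate the closed second quadrant of the $z$-plane, whose bounding positive imaginary axis is the image of the arc $|\sn t| = k^{-1}$ (the ``purple curve'' of Figure~\ref{fig:transformations}), and the identity $\sn(2iI|k^2) = i k^{-1}$ --- equivalently $2I = \int_0^{k^{-1}}[(1+s^2)(1+k^2 s^2)]^{-1/2}\,\dd s$, which is Step~3 of Algorithm~\ref{alg:step3a} --- places the lowest point of that arc at $t = 2iI$; finally $\sn t = k^{-1}$ only at $t = K + iK'$, which lies above $2iI$ since $k^{-1} < \infty$ forces $2I < K'$. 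Hence $g$ is analytic on the open strip $0 < \Im t < 2I$, the widest pole-free strip containing $\{\Im t = I\}$ uniformly in $\omega \ge 0$, and the midpoint placement $\Im t = I$ of Algorithm~\ref{alg:step3a} gives $d = I$.

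For (ii): with $d = I$ the error is $O(e^{-\pi I N_\lambda / K})$, so it remains to control $I$ and $K = K(k)$ through $k = \tfrac{\sqrt{M/m}-1}{\sqrt{M/m}+1}$ by elementary monotonicity/asymptotics of the complete elliptic integrals. As $M/m$ increases over $(1,\infty)$: $k \uparrow 1$ with complementary modulus $k' = \tfrac{2(M/m)^{1/4}}{\sqrt{M/m}+1}\downarrow 0$, and $K(k) = \log(4/k') + o(1)$ with monotonicity yields $K(k) = \tfrac14\log(M/m) + \log 2 + o(1)$, while $I = \tfrac12\int_0^{1/k}[(1+s^2)(1+k^2 s^2)]^{-1/2}\,\dd s$ decreases from $+\infty$ (as $M/m \to 1^+$) to $\pi/8$ (as $M/m \to \infty$, where the integrand $\to (1+s^2)^{-1}$ on $[0,1]$), so $I \ge \pi/8$ always. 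In the large-$M/m$ regime $8K = 2\log(M/m) + 8\log 2 + o(1)$, so $8K \le 2\log(M/m) + 6$ and $\pi I/K \ge \pi^2/(8K) \ge \pi^2/(2\log(M/m)+6)$; in the complementary regime $I$ grows without bound, which closes the gap. The additive constant $6$ is thus calibrated by $8\log 2 < 6$.

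The main obstacle is the geometric bookkeeping in (i) together with making (ii) uniform rather than asymptotic. In (i) one must be careful that the quadrature line stays away from \emph{every} pole for \emph{every} $\omega \ge 0$ at once --- this is exactly what forces $d = I$, rather than the larger half-width $K'/2$ one could exploit for the unconjugated part of $I_{\mu,\nu}$ alone --- and verify the folding identity that turns \eqref{contourTrap} into a genuine full-period trapezoidal rule. In (ii) the elliptic-integral estimates must hold for all $M/m > 1$, not just in the limits, so that the displayed constants ($\pi^2$ and $6$) come out exactly; this requires, e.g., a uniform two-sided comparison of $I$ with $\pi K/(2\log(M/m)+6)$. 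The underlying analytic fact --- geometric convergence of the trapezoidal rule on a strip --- is standard and may simply be quoted.
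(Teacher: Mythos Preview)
Your framework is right and matches the paper's: the error is governed by the trapezoidal rule on a $4K$-periodic analytic integrand, giving $O(e^{-\pi d N_\lambda/K})$ with $d=I$ the half-width of the pole-free strip, and then one must convert $I/K$ into the explicit bound in $M/m$. Your folding argument is essentially correct (combining the two symmetries gives $g((2K-\sigma)+iI)=\overline{g(\sigma+iI)}$, so $\int_{-K}^{3K}=2\Re\int_{-K}^{K}$), and your identification of the poles is accurate.

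The difference lies in step~(ii), and it is a genuine gap. You bound $I\ge\pi/8$ (the limiting value as $k\to1$) and then need $8K\le 2\log(M/m)+6$. This inequality \emph{fails} when $M/m$ is near~$1$: there $k\to0$, $K\to\pi/2\approx1.57$, while the right side tends to~$6$, so $8K\to4\pi>6$. Your remark that ``$I$ grows without bound, which closes the gap'' is not a proof; you would still need a uniform comparison of $I/K$ against $\pi/(2\log(M/m)+6)$, which you concede is left open.

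The paper closes this gap by proving the sharper bound $I>K'/4$ (via the observation that the circle $|u|=k^{-1/2}$ maps to the line $\Im t=K'/2$, forcing the purple curve to stay above it). Since $K'\ge\pi/2$ with equality only at $k=1$, this bound dominates your $\pi/8$ and, crucially, tracks the blow-up of $K'$ as $M/m\to1$. With $I>K'/4$ the required inequality becomes $4K/K'\le(2\log(M/m)+6)/\pi$, which is exactly (half of) the uniform estimate established in \cite{hale2008computing}; it holds for all $M/m>1$ because $K/K'\to0$ as $M/m\to1$ and $K/K'\sim\tfrac{2}{\pi}(\tfrac14\log(M/m)+\log2)$ with $8\log2<6$ as $M/m\to\infty$. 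So the missing ingredient in your argument is precisely the lemma $I>K'/4$; once you have it, you can quote \cite{hale2008computing} rather than estimate $K$ alone.
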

Therefore, our quadrature rule for the contour integral converges geometrically in the number of quadrature points.  Additionally, the number of points $N_\lambda$ needed for convergence to a specific error tolerance increases only logarithmically as $(\epsilon_{N_\orb}-\epsilon_{N_\occ})/(\epsilon_{N_{\occ+1}}-\epsilon_{N_\occ}) \to \infty$.

Finally, we note that for Step 3b, rather than calculating the trace of the log, it is more efficient to calculate the log of the determinant,
\begin{equation}
\tr[\ln(I-\chi^0(i\omega)v) + \chi^0(i\omega)v] = \ln[\det(I-\chi^0(i\omega)v)] + \tr[\chi^0(i\omega)v].
\end{equation}
This expression is nice for practical computation since it avoids the necessity of calculating the matrix logarithm.  Additionally, the determinant of a matrix may be calculated easily via an LU decomposition, for which there are readily available scalable codes.

\subsection{Step 4 -- Quadrature rule for the frequency integral}

For the $\omega$ integral, we used the following Clenshaw--Curtis quadrature \cite[Eq 3.2]{boyd1987exponentially} on the semi-infinite interval $[0,\infty)$,
\begin{align}
t_m &= \frac{\pi m}{N+1}, \label{CCdefoft} \\
\omega_m &= L \cot^2(t_m/2), \label{omegaIntTransformation} \\
\int_0^\infty f(\omega) \dd \omega &\approx \sum_{m=1}^N W_m f(y_m),
\end{align}
where 
\begin{equation}
W_m = \frac{4L \sin(t_m)}{(N+1)(1-\cos(t_m))^2} \sum_{j=1}^N \frac{1}{j}\sin(j t_m)[1-\cos(j\pi)],
\end{equation}
where $L$ is a parameter that must be chosen (we used $L = 10$).  The value of $L$ can affect the number of grid points needed for convergence, but this dependence is not very sensitive.  A necessary condition for the geometric convergence of this method is that $f(\omega) = O(\omega^{3/2})$ as $\omega \to \infty$ \cite{boyd1987exponentially}.  This is guaranteed by the following lemma.
\begin{lemma}
$\tr\left[\ln(I-\chi^0(i\omega)v) + \chi^0(i\omega)v\right] = O(\omega^{-2})$ as $\omega \to \infty$.
\end{lemma}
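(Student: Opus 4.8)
The plan is to show that $\chi^0(i\omega)$ itself decays like $\omega^{-1}$, so that $A := \chi^0(i\omega)\,v$ becomes an arbitrarily small perturbation of the zero matrix, and then to exploit the elementary fact that $\ln(I-A)+A$ is quadratic in $A$; since taking the trace of an $N_\aux\times N_\aux$ matrix costs only a factor $N_\aux$, this yields the $O(\omega^{-2})$ decay.

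The first ingredient is the bound $\norm{\chi^0(i\omega)}\le C/\omega$ for large $\omega$, with $C$ independent of $\omega$. This is immediate from the explicit form of $\chi^0_{\mu\nu}(i\omega)$: carrying out the contour integral in \eqref{chi_0Def} (equivalently, recognising $\chi^0$ as the ISDF approximation $\tilde{\chi}^0$ expressed in the dual auxiliary basis, cf. \eqref{chi_0inDualBasis}) produces a finite sum over occupied $j$ and virtual $k$ of terms $\bar{C}_j^\mu C_j^\nu C_k^\mu \bar{C}_k^\nu/(\epsilon_j-\epsilon_k-i\omega)$ together with their conjugates having $+i\omega$ in the denominator. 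Each denominator has modulus at least $\omega$, while the numerators are bounded by fixed quantities --- the finitely many orbital values $\psi_\ell(x_\mu)$ at the interpolation points and the finitely many orbital energies used in the calculation --- so every entry is $O(\omega^{-1})$ uniformly in $\omega$; summing the $N_\aux^2$ entries gives the asserted bound in the Frobenius, hence operator, norm.

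Since $v$ is a fixed finite matrix, it follows that $\norm{A}=\norm{\chi^0(i\omega)v}\le C\norm{v}/\omega\to 0$, so for $\omega$ large enough that $\norm{A}<1$ the matrix $I-A$ is invertible and $\ln(I-A)+A=-\sum_{p\ge 2}A^p/p$ converges absolutely (no shift $c$ as in \eqref{logRewritten} is needed here, precisely because $\norm{A}\to 0$). I would then take the trace and use $\lvert\tr[A^p]\rvert\le N_\aux\norm{A^p}\le N_\aux\norm{A}^p$, valid since $A$ is $N_\aux\times N_\aux$, to obtain
\[ \bigl\lvert\tr[\ln(I-A)+A]\bigr\rvert\ \le\ \sum_{p\ge 2}\frac{N_\aux\norm{A}^p}{p}\ \le\ \frac{N_\aux\norm{A}^2}{1-\norm{A}}\ =\ O(\omega^{-2}), \]
which is the claim.

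I do not expect a genuine obstacle; the only point needing care is that the constant in the decay estimate be honestly independent of $\omega$ --- it is, depending only on the fixed data of the calculation ($N_\aux$, the orbital values at the interpolation points, and the orbital energies). One could equally run the argument at the level of the operator $\tilde{\chi}^0\hat{v}$ acting on the $n$-dimensional discretised space and then pass to the dual auxiliary basis via \eqref{chi_0inDualBasis}, but working directly with the $N_\aux\times N_\aux$ matrices $\chi^0(i\omega)$ and $v$ is the cleanest route.
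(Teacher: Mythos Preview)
Your proposal is correct and follows essentially the same approach as the paper: establish the entrywise $O(\omega^{-1})$ decay of $\chi^0(i\omega)$, deduce that $A=\chi^0(i\omega)v$ is small, expand $\ln(I-A)+A$ as a series beginning at the quadratic term, and bound the trace. The only cosmetic difference is that the paper uses the Frobenius norm together with $|\tr M|\le\sqrt{N_\aux}\,\|M\|_F$, whereas you use the operator norm with $|\tr M|\le N_\aux\,\|M\|$; both routes are valid and yield the same conclusion.
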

\begin{proof}
It is straightfoward to show that $|\chi^0_{\mu\nu}(i\omega)| < c_{\mu\nu}\omega^{-1}$, where $c_{\mu\nu}$ is independent of $\omega$.  This implies $\norm{\chi^0(i\omega)}_F < C \omega^{-1}$.  Then we have
\begin{align}
\norm{\chi^0(i\omega)v}_2 &\le \norm{\chi^0(i\omega)v}_F \notag\\
  &\le \norm{\chi^0(i\omega)}_F \norm{v}_F \notag\\
  &\le C \norm{v}_F \omega^{-1}.
\end{align}
Therefore, for $\omega$ large enough, the eigenvalues of $\chi^0(i\omega)v$ are all less than $1$ in magnitude.  This justifies the Taylor series expansion in the following,
\begin{align}
\left|\tr\left[\ln(I-\chi^0(i\omega)v) + \chi^0(i\omega)v\right]\right| &\le C \sqrt{N_\aux} \norm{\ln(I-\chi^0(i\omega)v) + \chi_0v}_F \notag\\
 &= C \sqrt{N_\aux} \norm{-\frac{1}{2} (\chi^0(i\omega)v)^2 + O(\chi^0(i\omega)v)^3}_F \notag\\
 &\le C \sqrt{N_\aux} \norm{v}_F^2 \omega^{-2} + O(\omega^{-3}), \label{boundOnIntegrand}
\end{align}
where the constant $C$ changes from line to line.
\end{proof}

The use of Clenshaw--Curtis allows a simple and fast converging adaptive quadrature rule since the points of the quadrature rule have a nice nesting property as seen by \eqref{CCdefoft}.

\section{Numerical results}
\label{sec:numerics}

\begin{figure}[h]
\centering
\includegraphics[trim = 0.4in 1.75in 0.1in 1.6in, clip, scale=1.3]{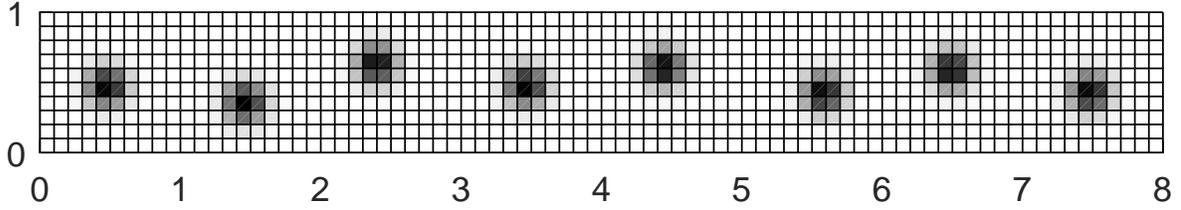}
\caption{Example of our external potential with 8 wells.  White is zero, darker is more negative.}
\end{figure}

Our numerical results use the following as the test problem.  Our two dimensional spatial grid is $10 \times 10N_{\occ}$ equally spaced points.  First, we solve for the KS orbitals of the periodic system with Hamiltonian $H = T + V$, where $T = -\frac{1}{2}\Delta$ is the kinetic energy operator and $V$ is the external potential.  The external potential consists of $N_{\occ}$ Gaussian potential wells, the centers of which are randomly perturbed from the centers of their respective $10\times 10$ box of grid points.  Then the eigenvectors of $H$ are used as the orbitals in the calculation of the RPA correlation energy.

\subsection{Convergence with respect to number of orbitals}

In these tests, we check the convergence of the RPA energy with respect to the number of orbitals used in the calculation.  Figure \ref{fig:convWRTorbitals4} shows the results for a system with 4 electrons (and therefore a maximum of 400 orbitals).  In Figure \ref{fig:convWRTorbitals32}, we have scaled the entire system up by a factor of 8 (maximum of 3200 orbitals) and run the same test.  We can see by comparing the figures that the results are essentially identical.  Both the percentage of orbitals needed in the calculation for a particular error value and the number of auxiliary basis functions (as a percentage of the number of grid points) needed for a particular error level in the ISDF step are nearly the same in the two cases.

In general, one would want to work in a regime where $N_\aux$ is as small as possible while still achieving sufficient accuracy.  Note that if $N_\aux \approx n$, then there is no point in using ISDF and one would be better off using \eqref{chi_0noISDF} instead.  Figure \ref{fig:convWRTorbitals} implies that ISDF is worth doing in the $10^{-2}$ relative error range, where we can see from Figure \ref{fig:convWRTorbitals}, the ISDF yields an $N_\aux$ significantly below $n$.  However, this statement is highly dependent on the number of grid points.  For example, in Figure \ref{fig:convWRTorbitalsCompareGridPts}, we see that the ISDF can be worthwhile all the way down to $10^{-4}$ relative error in the $n=1600$ case.  The reduction of the basis size from $n$ to $N_\aux$ (as a proportion of the number of grid points) is amplified as the number of grid points is increased with all other variables fixed.  This is because, as will be discussed shortly, $N_\aux$ depends on $N_\orb$, not $n$.

In Figure \ref{fig:convWRTorbitalsCompareGridPts}, we fix an external potential and look at the behavior of the algorithm when the number of primal basis functions (grid points) is increased.  Both tests are run with $N_\occ = 4$.  One is run with $n=400$ grid points and the other with $n = 1600$.  Therefore, the maximal number of auxiliary basis functions are 400 and 1600, respectively.  We make two observations about the figure.  First, the number of auxiliary basis functions required depends only on the number of orbitals $N_\orb$ used in the calculation, until saturation occurs.  That is, in both the $n = 400$ and $n = 1600$ tests, $N_\aux$ is essentially the same for $N_\orb \le 200$, at which point the number of auxiliary basis functions starts to max out at 400 in the smaller system.  Second, we note that the error in the two tests is mostly identical for a given number of orbitals $N_\orb$ used in the calculation.  These two observations suggest that the ISDF procedure behaves precisely how one would hope as the number of primal basis functions (grid points) is increased.  That is, (before the auxiliary basis functions max out) the number of auxiliary basis functions and the error in $\ERPAcor$ depends only on $N_\orb$, and not on $n$.  This last observation is, of course, only valid when $n$ is large enough and $\text{tol}$ in the IDSF is small enough that the errors due to the spatial discretization and the ISDF approximation are negligible compared to the error induced by truncating the number of orbitals.

\begin{figure}[h]
\centering
\begin{subfigure}{.5\textwidth}
  \centering
  \includegraphics[scale=.55]{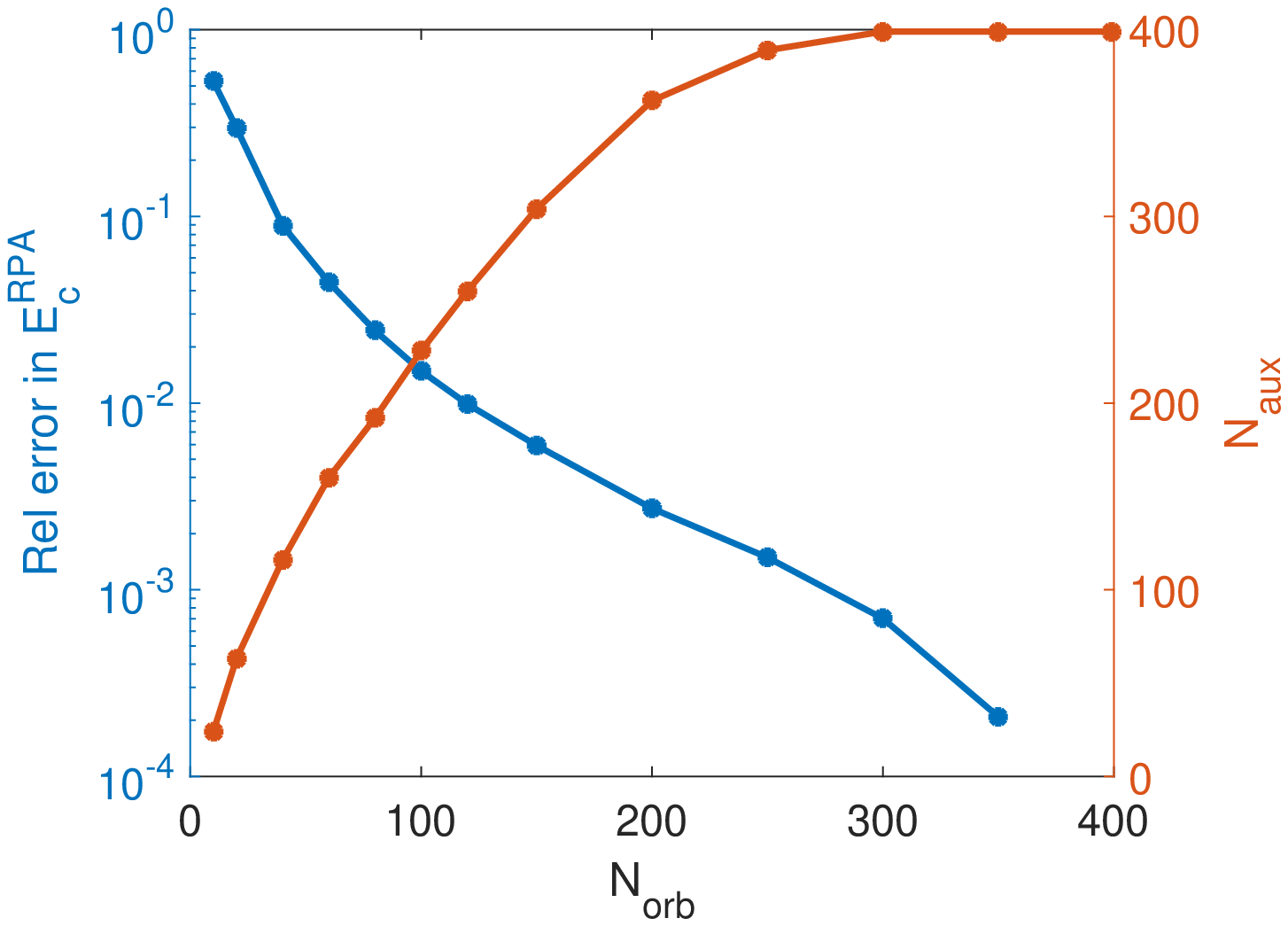}
  \caption{$N_{\occ} = 4$}
  \label{fig:convWRTorbitals4}
\end{subfigure}%
\begin{subfigure}{.5\textwidth}
  \centering
  \includegraphics[scale=.55]{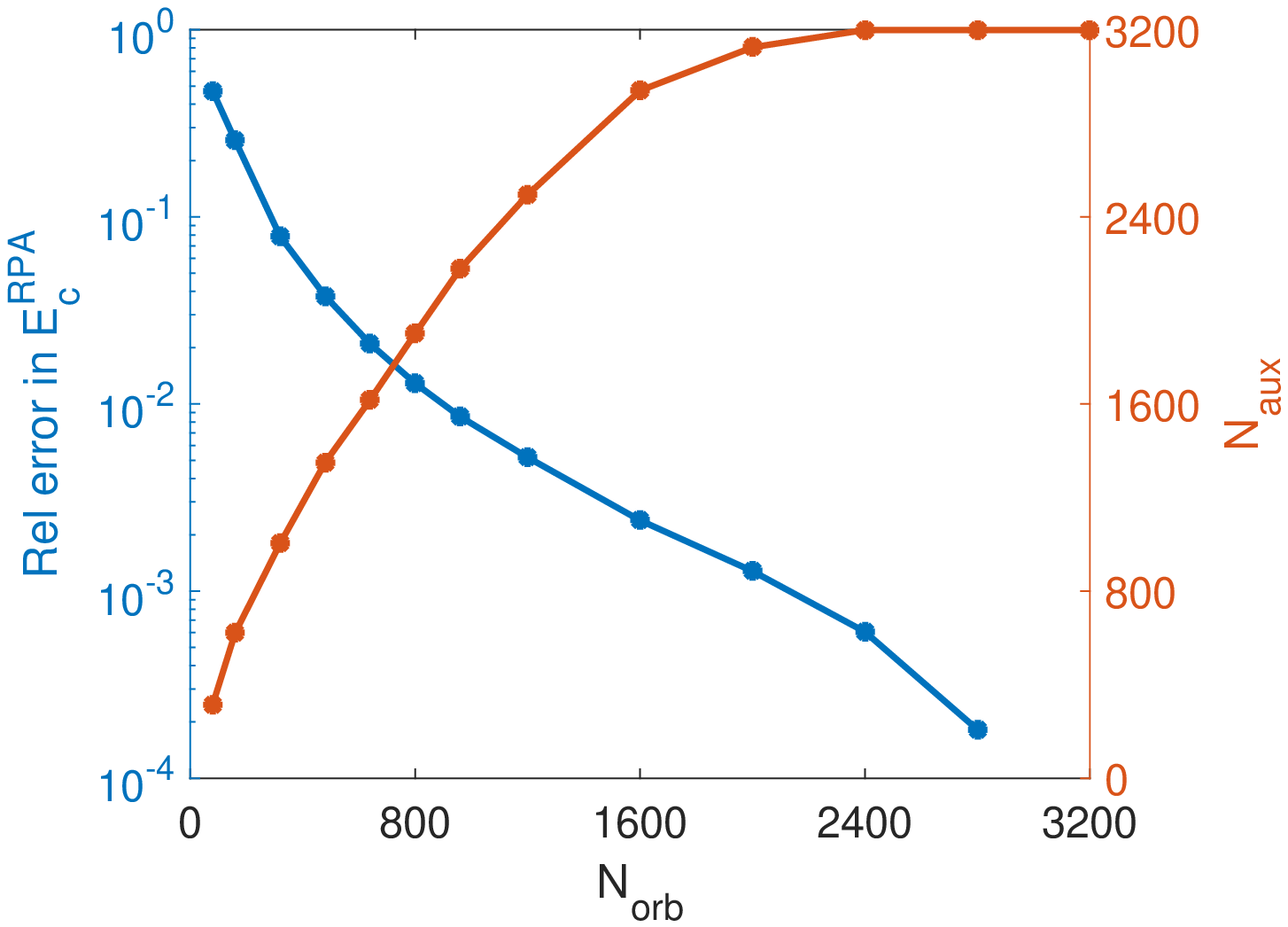}
  \caption{$N_{\occ} = 32$}
  \label{fig:convWRTorbitals32}
\end{subfigure}
\caption{Convergence of the RPA energy with respect to the total number of orbitals used in the calculation.  Both the relative error in $\ERPAcor$ and the number of auxiliary basis functions used in the calculation are plotted.  In the error of each plot, the numerical result with all 400 (3200) orbitals is used as the ``exact'' $\ERPAcor$.  An error tolerance of $\text{tol} = 10^{-4}$ was used in the ISDF.}
\label{fig:convWRTorbitals}
\end{figure}

\begin{figure}[htb]
\centering
\includegraphics[scale=.6]{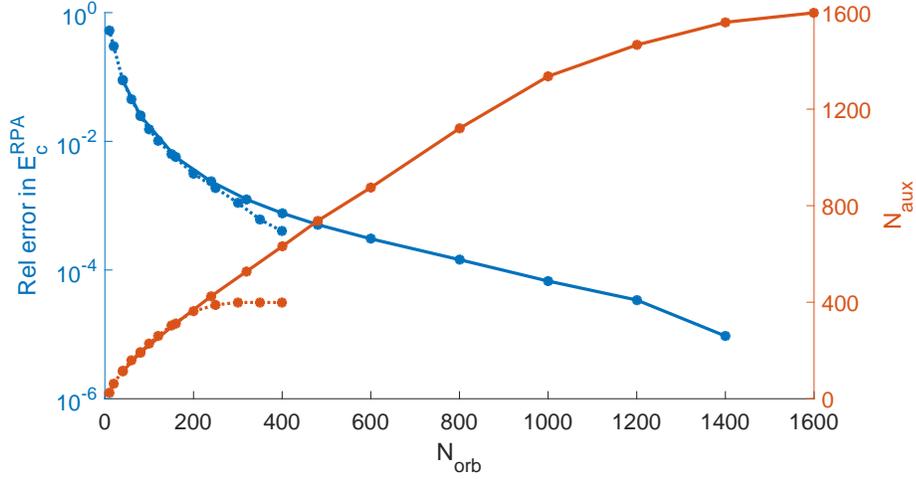}
\caption{Results with $n = 400$ and $n = 1600$ for $N_\occ = 4$ with the same external potential.  Dotted lines are $n = 400$, solid lines are $n = 1600$.  The numerical solution with 1600 orbitals in the $n = 1600$ case is used as the ``exact'' $\ERPAcor$ for purposes of plotting the error.  For the determination of $N_\aux$, we use $\text{tol} = 10^{-4}$ in the ISDF.}
\label{fig:convWRTorbitalsCompareGridPts}
\end{figure}

\subsection{Cubic scaling}

In this test, we show the cubic scaling behavior of the algorithm.  The quartic scaling method using traditional density fitting is also plotted for comparison.  The traditional density fitting requires that we input basis functions, so we use the basis functions $\{P_\mu\}_{\mu = 1}^{N_{\aux}}$ from the ISDF.  The results from Figure \ref{fig:convWRTorbitals} suggest that as we scale up the system size, we can choose the number of orbitals $N_{\orb}$ to use in the calculation as a constant percentage of the number of grid points.  So, we choose $N_{\orb} = 0.2 n$.  We scale the system size up to a maximum of $N_{\occ} = 160$.  We can see in Figure \ref{fig:timing} that the cubic scaling algorithm greatly outperforms the quartic scaling algorithm for large system sizes.

\begin{figure}[h]
\centering
\begin{subfigure}{.5\textwidth}
  \centering
  \includegraphics[scale=.55]{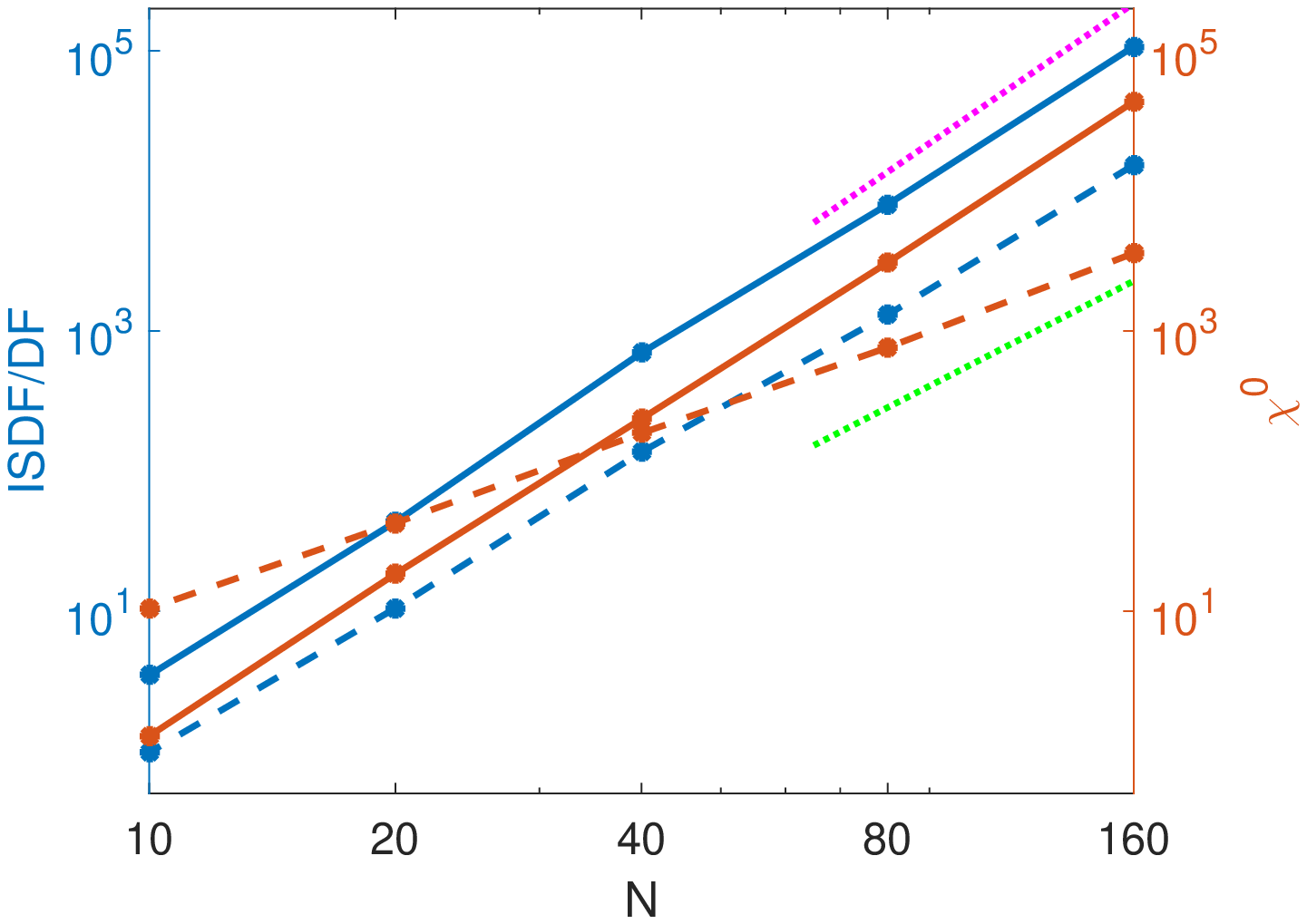}
\end{subfigure}%
\begin{subfigure}{.5\textwidth}
  \centering
  \includegraphics[scale=.55]{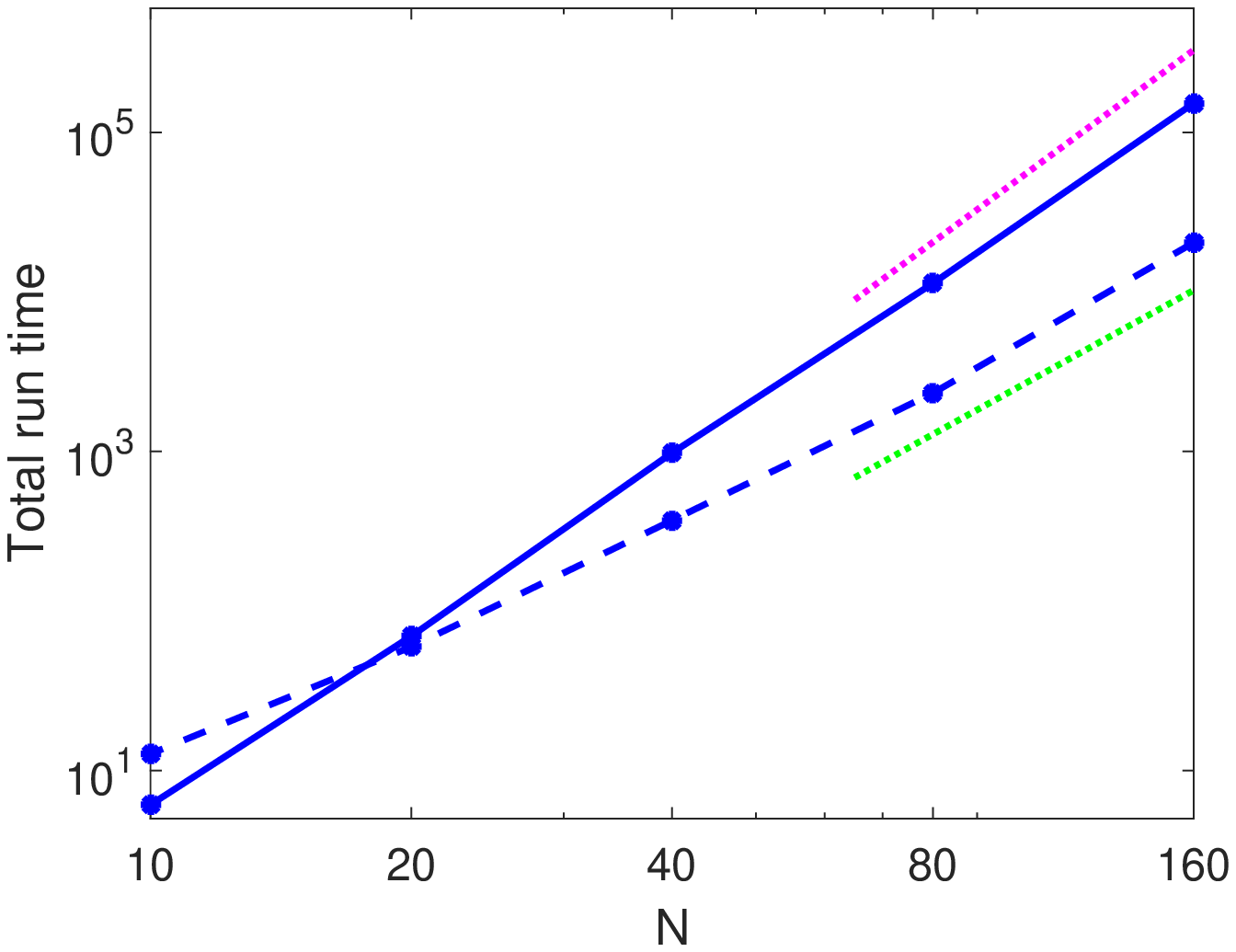}
\end{subfigure}
\caption{The timing results for the quartic scaling method are plotted with solid lines, and the results for the cubic scaling method are plotted with dashed lines.  For reference, the purple and green dotted lines represent the slopes of $N^4$ and $N^3$ respectively.  The left figure compares the time required to calculate $\chi^0$ and the time to perform the respective density fitting schemes for each method.  The right figure compares the total run time to calculate $\ERPAcor$ for each method.}
\label{fig:timing}
\end{figure}

\section{Conclusion}

In this paper, we have presented a new cubic scaling algorithm for the computation of the RPA correlation energy.  The key of the algorithm was to separate the dependence on $j$ and $k$ in the denominator of \eqref{chi_0}.  This allows a natural cubic scaling method.  However, in order to further reduce the computational cost, we employed the ISDF in analogy to how density fitting is used in the quartic algorithm.  Another key idea to keep the computational cost down was to take advantage of the periodic and analytic nature of the function in the contour integral, which resulted in a geometrically convergent nested quadrature rule based on the simple trapezoid rule.

It is worth noting that the algorithm presented is highly parallelizable.  Step 1, the ISDF, is composed of linear algebra routines which can be parallelized.  It is clear that Steps 2 and 3a can be parallelized.  Step 3b is parallelizable using the comment at the end of Section \ref{subsec:contourIntegral}.  There is no need to parallelize Step 4 as it is a simple one dimensional integral, but the Step 3 computations for each quadrature point $\omega_m$ could also be done in parallel.

Future directions include a parallel implementation of the algorithm, as well as implementation into scientific software.  Another direction would be to look into the analytic properties of $\chi^0(i\omega)$.  It would also be interesting to apply the ISDF to the Laplace transform method for cubic scaling RPA algorithms.  We also plan to extend the algorithm presented in this paper to particle-particle RPA (ppRPA) \cite{vanAggelenYangYang:14}.

\appendix 
\section{Derivation of Algorithm~\ref{alg:step3a}}

In this Appendix, we first rigorously derive Algorithm \ref{alg:step3a}. Then we conclude by proving Lemma \ref{lem:contourConvRate}.

Recall from Section \ref{subsec:contourIntegral} that we wish to find the lowest point of the purple curve in the $t$-plane.  We do this now.  First, we note that \eqref{transformutoz} is a M\"{o}bius transformation and therefore its inverse maps the imaginary line to a generalized circle in the $u$-plane.  In particular, it maps the upper half imaginary line in the $z$-plane to the upper semicircle with radius $k^{-1}$ centered at the origin (the purple curve in Figure \ref{fig:uplane}).  To map this semicircle back to the $t$-plane, we note the formula for the inverse of $\sn(t)$, \cite[Chapter 11.3]{beals2010special}
\begin{equation}
\sn^{-1}(u|k^2) = \int_0^u \frac{\dd s}{\sqrt{(1-s^2)(1-k^2s^2)}}. \label{deriveWRTimpart}
\end{equation}
Then we can use basic calculus to minimize $\Im\left[\sn^{-1}(k^{-1}e^{i\theta})\right]$ over $0 \le \theta \le \pi$.
\begin{equation}
\frac{d}{d\theta} \Im\left[\sn^{-1}(k^{-1}e^{i\theta})\right] = \Re\left[ \frac{k^{-1}\sqrt{\cos(2\theta) - 1 - k^{-2} + (k^{-2}-1) e^{-2i\theta}}}{(1+k^{-4} - 2k^{-2} \cos(2\theta))(2-2\cos(2\theta))}\right].
\end{equation}
This expression is 0 if and only if the expression under the radical is nonpositive.  Since the imaginary part of the expression under the radical must be 0, we require $\theta \in \{0, \pi/2, \pi\}$.  $0$ and $\pi$ correspond to the corners of the rectangle, so this means that $\theta = \pi/2$ must give us the minimum imaginary part of points along the purple curve.  In conclusion, we choose our quadrature points in the $t$-rectangle with imaginary part given by
\begin{align}
\frac{1}{2}\Im\left[\sn^{-1}(ik^{-1})\right] &= \frac{1}{2} \Im \int_0^{ik^{-1}} \frac{\dd s}{\sqrt{(1-s^2)(1-k^2s^2)}} \notag\\
  &= \frac{1}{2} \int_0^{k^{-1}} \frac{\dd s}{\sqrt{(1+s^2)(1+k^2s^2)}}.
\end{align}
This integral must be carried out numerically.  However, it is simple and only needs to be done once at the beginning of the algorithm, so we just use the midpoint rule.  We also note that we can easily remove any guess work here by proving a practically useful bound which can be obtained via the standard error analysis for the midpoint rule.  First let
\begin{equation}
g(s) = \frac{1}{(1+s^2)(1+k^2s^2)}.
\end{equation}
Then computation shows
\begin{equation}
g''(s) = \frac{6k^4s^6 + 5k^4s^4 + 2k^4s^2 + 5k^2s^4 - 2k^2s^2 - k^2 + 2s^2 - 1}{[(1+s^2)(1+k^2s^2)]^{5/2}}.
\end{equation}
By noting $0 \le k < 1$ and $0 \le sk \le 1$, we have
\begin{equation}
|g''(s)| \le \frac{13s^2 + 11}{(1+s^2)^{5/2}}. \label{secondDeriveBound}
\end{equation}
Using the fact that the right hand side of \eqref{secondDeriveBound} is decreasing on $[0,\infty)$,
\begin{align}
\left|\int_0^{k^{-1}} g(s) \dd s - h\sum_{j=1}^J f(s_{j+1/2})\right| &\le \frac{1}{24} h^3 \sum_{j=1}^J |f''(\xi_j)| \notag\\
  &\le \frac{1}{24}h^3 \left(13h^2 + 11 + \frac{1}{h} \int_0^{k^{-1}-h} |f''(s)| \dd s \right) \notag\\
  &\le \frac{1}{24}h^3 \left(13h^2 + 11 + \frac{1}{h} \int_0^{k^{-1}} \frac{13s^2 + 11}{(1+s^2)^{5/2}} \dd s \right) \notag\\
  &= \frac{1}{24}h^3 \left( 13h^2 + 11 + \frac{35k^{-3} + 33 k^{-1}}{3h(1+k^{-2})^{3/2}} \right) \notag\\
  &\le \frac{1}{24}h^3 \left( 13h^2 + 11 + \frac{35}{3h} \right),
\end{align}
where the last line uses the fact that the preceding line is a strictly increasing function of $k^{-1}$.  This estimate implies that a mesh size of $1/100$ guarantees an accuracy of $10^{-4}$.  Considering the fact that if the value of this integral is off by a little it will only slightly change the convergence rate, this is sufficiently accurate.

We conclude this discussion of the quadrature rule with some brief analytic results, including the proof of Lemma \ref{lem:contourConvRate}.  First, we note that in a realistic system, $M/m \gg 1$ which implies $k \approx 1$.  This guarantees that using the midpoint rule to calculate $I$ will not require more than about 100 grid points.  Next, we note that $I > K'/4$.  This can be seen by showing that the circle with radius $k^{-1/2}$ centered at the origin in the $u$-plane maps to the horizontal line with imaginary part $K'/2$ in the $t$-plane.  Before proving this statement, let's see why this implies $I > K'/4$.  First, note that $1 < k^{-1/2} < k^{-1}$.  Therefore, the circle with radius $k^{-1/2}$ in the $u$-plane must map between the purple and red curves in the $t$-plane.  This means that the purple curve cannot go any lower than $K'/2$, which implies $I > K'/4$.  To show that the aforementioned circle maps to a line with constant imaginary part, it is enough to show
\begin{equation}
\Im \int_{k^{-1/2}e^{i\theta_1}}^{k^{-1/2}e^{i\theta_2}} \frac{\dd s}{\sqrt{(1-s^2)(1-k^2s^2)}} = \Re \int_{\theta_1}^{\theta_2} \frac{k^{-1/2}\sqrt{2\cos(2\theta) - k^{-1} - k}}{\sqrt{(1-2k^{-1}\cos(2\theta) + k^{-2})(1-2k\cos(2\theta) + k^2)}} \dd\theta,
\end{equation}
is equal to 0 for all $0 \le \theta_1 \le \theta_2 \le \pi$.  It is easily verified that (for $0 < k < 1$) the denominator on the right is always positive and the numerator is always a pure imaginary number.  Therefore, the integrand is always purely imaginary, which proves the claim.  Finally, the imaginary part of the line is $K'/2$ since $\sn^{-1}(ik^{-1/2}) = iK'/2$ \cite[Table 22.5.2]{NIST:DLMF}.

Now following \cite{hale2008computing} and using the fact that $I > K'/4$, we have that for any $M/m > 1$, the error of the quadrature rule is 
\begin{equation}
O\left(\exp\left(\frac{-\pi^2 N_\lambda}{2\log(M/m)+6}\right)\right).
\end{equation}

\bibliography{references}
\bibliographystyle{abbrv}

\end{document}